\documentclass[journal, a4paper]{IEEEtran}
\IEEEoverridecommandlockouts


\usepackage{cite}
\usepackage{amsmath,amssymb,amsfonts, amsthm}
\usepackage{algorithmic}
\usepackage{graphicx}
\usepackage{textcomp}
\usepackage{xcolor}
\usepackage{booktabs}
\usepackage{tabularx}
\usepackage{algorithm}
\usepackage{algorithmic}
\usepackage{bbm}
\usepackage{bm}
\usepackage{multirow}
\usepackage{float}
\usepackage{glossaries}
\usepackage{siunitx}
\usepackage{accents}
\newacronym{3gpp}{3GPP}{3rd Generation Partnership Project}    
\newacronym{6g}{6G}{sixth generation of mobile networks}
\newacronym{awgn}{AWGN}{additive white Gaussian noise}
\newacronym{ber}{BER}{bit error rate}
\newacronym{bler}{BLER}{block error rate}
\newacronym{cdf}{CDF}{cumulative distribution function}
\newacronym{cml}{CML}{commercial microwave link}
\newacronym{iot}{IoT}{Internet of Things}
\newacronym{isac}{ISAC}{integrated sensing and communications}
\newacronym{ntn}{NTN}{Non-Terrestrial Network}
\newacronym{leo}{LEO}{Low Earth Orbit}
\newacronym{geo}{GEO}{Geosynchronous Earth Orbit}
\newacronym{isl}{ISL}{Inter-Satellite Link}
\newacronym{gsl}{GSL}{Ground-to-Satellite Link}
\newacronym{qos}{QoS}{Quality of Service}
\newacronym{ofdma}{OFDMA}{Orthogonal Frequency-Division Multiple Access}
\newacronym{b5g}{B5G}{5G and Beyond}
\newacronym{mimo}{MIMO}{Multiple-Input Multiple-Output}
\newacronym{mcs}{MCS}{modulation and coding scheme}
\newacronym{embb}{eMBB}{Enhanced Mobile Broadband}
\newacronym{ue}{UE}{User Equipment}
\newacronym{nr}{NR}{New Radio}
\newacronym{gap}{GAP}{Generalized Assignment Problem}
\newacronym{mgap}{MGAP}{Multi-Level Generalized Assignment Problem}
\newacronym{ml}{ML}{machine learning}
\newacronym{csi}{CSI}{channel state information}    
\newacronym{ran}{RAN}{radio access network}
\newacronym{5g}{5G}{the 5th generation of mobile networks}
\newacronym{uav}{UAV}{unmanned aerial vehicle}
\newacronym{snr}{SNR}{signal-to-noise ratio}
\newacronym{ra}{RA}{resource allocation}
\newacronym{rssi}{RSSI}{received signal strength indicator}
\newacronym{rv}{RV}{random variable}
\newacronym{ul}{UL}{uplink}
\newacronym{dl}{DL}{downlink}
\newacronym{mle}{MLE}{maximum likelihood estimator}
\newacronym{crb}{CRB}{Cramér-Rao bound}
\newacronym{mse}{MSE}{mean-squared error}
\newacronym{nmse}{NMSE}{normalized mean-squared error}
\newacronym{ppp}{PPP}{Poisson point process}
\newacronym{milp}{MILP}{mixed-integer linear problem}
\newacronym{rtt}{RTT}{round-trip time}
\newacronym{noma}{NOMA}{non-orthogonal multiple access}
\newacronym{jmra}{JMRA}{joint matching and resource allocation}
\newacronym{dmrab}{DMRAB}{disjoint matching and resource allocation benchmark}
\newacronym{kpi}{KPI}{key performance indicator}
\newacronym{dtmc}{DTMC}{discrete-time Markov chain}
\newacronym{sca}{SCA}{successive convex approximation}
\usepackage[font=footnotesize]{caption}
\usepackage[font=scriptsize]{subcaption}
\usepackage{tikz}
\usepackage{pgfplots}
\pgfplotsset{compat=1.18}
\usepgfplotslibrary{colorbrewer}
\usetikzlibrary{automata,arrows,positioning,calc,shapes.misc}

\theoremstyle{definition}
\newtheorem{definition}{Definition}[]

\newtheorem{proposition}{Proposition}
\newcommand{\set}{\mathcal}

\DeclareMathOperator*{\round}{round}
\DeclareMathOperator*{\argmax}{arg\,max}


\def\BibTeX{{\rm B\kern-.05em{\sc i\kern-.025em b}\kern-.08em
    T\kern-.1667em\lower.7ex\hbox{E}\kern-.125emX}}

\newcommand{\change}[1]{#1}

\definecolor{amaranth}{rgb}{0.9, 0.17, 0.31}

\newcommand{\mc}[1]{\mathcal{#1}}   

\newlength{\dhatheight}
\newcommand{\hathat}[1]{%
    \settoheight{\dhatheight}{\ensuremath{\hat{#1}}}%
    \addtolength{\dhatheight}{-0.35ex}%
    \hat{\vphantom{\rule{1pt}{\dhatheight}}%
    \smash{\hat{#1}}}}

\begin{document}
\bstctlcite{IEEEexample:BSTcontrol}
\title{\change{Integrating Atmospheric} Sensing and Communications for Resource Allocation in \change{NTNs}}


\author{Israel Leyva-Mayorga,~\IEEEmembership{Member,~IEEE,} 
        Fabio Saggese,~\IEEEmembership{Member,~IEEE,} 
        Lintao Li,  and 
        Petar Popovski~\IEEEmembership{Fellow,~IEEE}
\thanks{I. Leyva-Mayorga, F. Saggese, and P. Popovski are with the Department of Electronic Systems, Aalborg University, Denmark; email: \{ilm, fasa, petarp\}@es.aau.dk.  L. Li is with the Department of Electronic Engineering, Tsinghua University, Beijing 100084, China.
This work was partly supported by the Villum Investigator grant ``WATER'' from the Villum Foundation, Denmark.}}

\maketitle
\begin{abstract}
The integration of \glspl{ntn} with \gls{leo} satellite constellations into 5G and Beyond is essential to achieve truly global connectivity. A distinctive characteristic of \gls{leo} mega-constellations is that they constitute a global infrastructure with predictable dynamics, which enables the pre-planned allocation of radio resources. However, the different bands that can be used for ground-to-satellite communication are affected differently by atmospheric conditions such as precipitation, which introduces uncertainty on the attenuation of the communication links at high frequencies. Based on this, we present a compelling case for applying \gls{isac} in heterogeneous and multi-layer \gls{leo} satellite constellations over wide areas. Specifically, \change{we propose a sensing-assisted communications framework and frame structure that not only enables the accurate estimation of the \emph{atmospheric} attenuation in the communication links through sensing but also leverages this information to determine the optimal serving satellites and allocate resources efficiently for downlink communication with users on the ground.}
The results show that, by dedicating an adequate amount of resources for sensing and solving the association and resource allocation problems jointly, it is feasible to increase the average throughput by $59$\% and the fairness by $700$\% when compared to solving these problems separately.
\end{abstract}
\glsresetall
\begin{IEEEkeywords}
    5G and Beyond; \Gls{isac}; \Gls{leo} satellite constellations; \Glspl{ntn}; Resource allocation; Rainfall sensing.
\end{IEEEkeywords}

\glsresetall

\section{Introduction}
\label{sec:intro}
\Glspl{ntn} are key for the \gls{6g} to overcome the limitations of terrestrial infrastructure by enabling ubiquitous connectivity, ensuring global access, and enhancing network resilience by offering alternative communication pathways in case of terrestrial network failure~\cite{IMT2030}. 
Especially, \gls{leo} satellites, positioned at altitudes ranging from approximately $500$ to $2000$ kilometers, have gained unprecedented importance for \gls{6g} \glspl{ntn}, as they can achieve significantly lower latency compared to \gls{geo} satellites, which is crucial for time-sensitive applications~\cite{Leyva-Mayorga2020}.
Furthermore, the deployment of \gls{leo} satellite constellations, comprising numerous interconnected satellites, ensures redundancy and reliability. If one satellite fails, others can seamlessly take over, enhancing the \gls{ntn} resilience~\cite{Rinaldi2020:ntn-survey}. 

\begin{figure}[thb]
    \centering
    \includegraphics[width=.9\columnwidth]{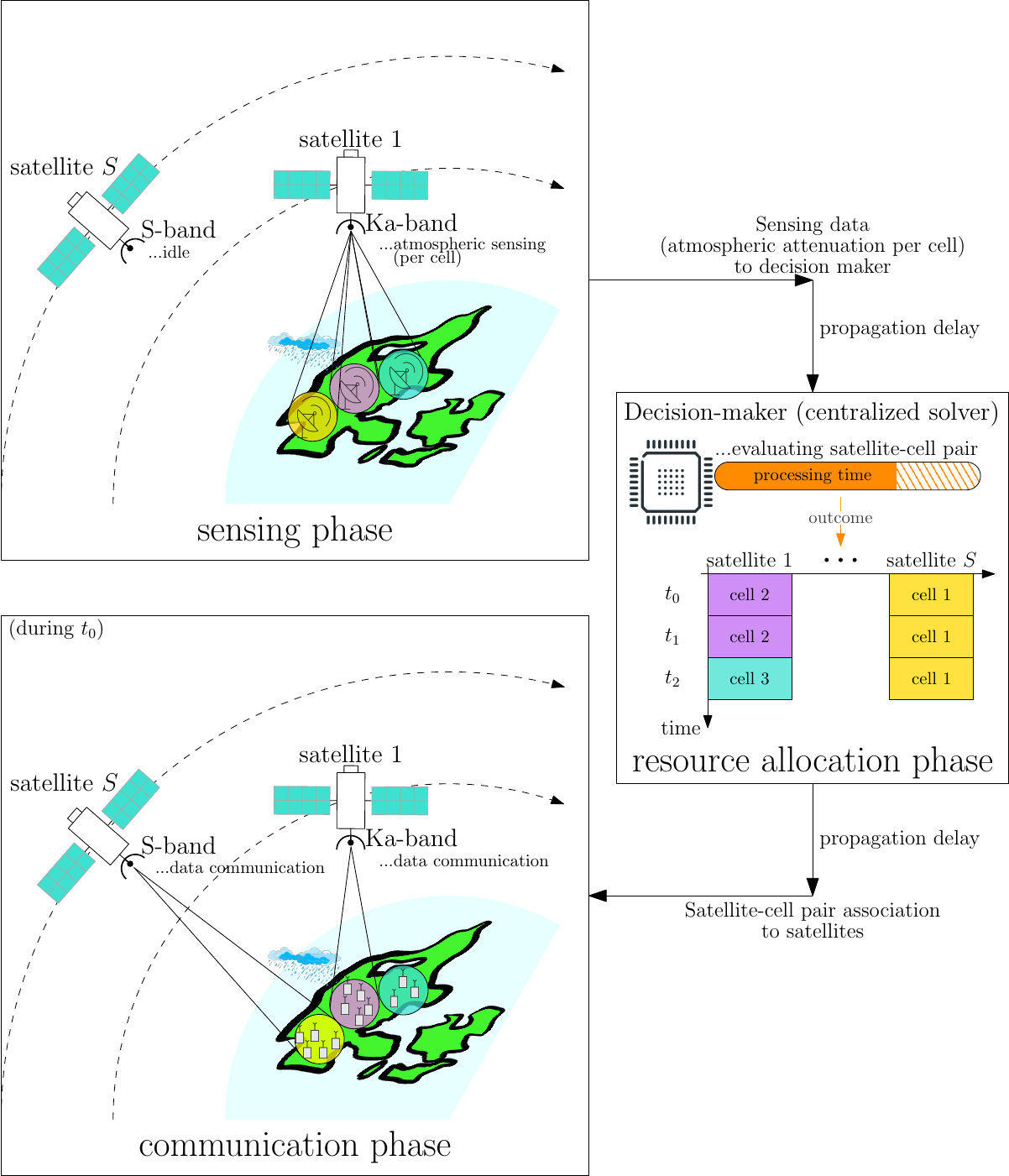}
    \caption{\change{Toy example of the scenario at hand and the algorithmic depiction of the integration of sensing and communication.}}
    \label{fig:toy}
\end{figure}

Inspired by the terrestrial network, the \gls{3gpp} proposes to divide the ground segment of a 5G \gls{ntn}, where \glspl{ue} are located, into cells served by different \gls{leo} satellites of the constellation~\cite{3GPPTR38.821}.
Thus, the \gls{qos} experienced by the ground \gls{ue} depends on the satellite-cell association, the \gls{snr} at the receiver~\cite{Rinaldi2020:ntn-survey}, and the resources assigned for communication. Consequently, the \gls{qos} of the \glspl{ue} can be maximized by performing an optimal satellite-cell association and \gls{ra}; however, this requires an accurate model of the \glspl{gsl} and their state. 

By knowing the geographical position of the satellite-cell pairs, \change{a free-space propagation model with} \gls{awgn} can be employed to model the \glspl{gsl}. Nevertheless, this model neglects all the \emph{atmospheric effects} as sources of uncertainty. Specifically, K- and higher bands have a greater available bandwidth than, for example, the S-band, but these are greatly attenuated by the presence of water particles. Therefore, the attenuation due to atmospheric effects, further referred to as \emph{atmospheric attenuation}, can easily limit the communication performance in higher bands if not accurately estimated~\cite{ITU-rain}.
This builds the case for enabling \gls{isac} for \gls{ntn} by integrating a sensing functionality into the \gls{ntn} system to estimate the impact of atmospheric effects on the \gls{snr}. This goes in line with the vision of \gls{isac} playing an essential role in \gls{6g} by combining the functionalities of wireless communication and environmental sensing into a single system~\cite{Liu2022:isac-survey}.

Specifically, we focus on \emph{sensing-assisted communication} to leverage advanced sensing techniques for estimating the \gls{snr} and adapting the \gls{ra} to variations caused by atmospheric effects. 
From a communication point-of-view, the \gls{ra} process uses the sensing information to optimize the association of satellites to cells, accounting for the impact of atmospheric conditions on different frequency bands within the satellite constellation.
Additionally, from a \emph{sensing-as-a-service} perspective~\cite{Liu2022:isac-survey}, the estimated attenuation can provide real-time weather information at a global continental scale, supporting traditional systems in rainfall monitoring, considered one of the potential use-cases defined by the \gls{3gpp} for \gls{isac}~\cite{3GPPTR22.837isac}. 

When considering \gls{leo} \gls{ntn}, further investigation is needed on \gls{isac} protocols orchestrating sensing and communication coexistence. First, only satellites operating on K- or above bands can perform atmospheric sensing, as only these are attenuated by the presence of water molecules~\cite{gritton2005atmospheric}. Second, the communication performance depends on the \gls{ra} process, which, in turn, depends on the sensing process. Third, the \gls{ra} must allow the satellites at K- and higher bands to exploit their high available bandwidth without suffering from attenuation, while the rest of the satellites use their narrower bandwidth to cover the remaining cells. 

To address these challenges, we propose to organize the \gls{ntn} operations into sensing, \gls{ra}, and communication phases, as depicted in Fig.~\ref{fig:toy}: within the sensing phase, sensing-enabled satellites 
\change{estimate the atmospheric attenuation in each cell} (e.g., \change{by} transmitting pilot sequences to anchor nodes), while other satellites remain idle. Then, the sensing data is relayed to the decision-makers, who use the \change{estimated atmospheric attenuation at each cell to perform} the \gls{ra} and communicate \change{the resulting satellite-pair association} to the satellites. \change{In our Fig.~\ref{fig:toy} example, cell $1$ is severely affected by rain; by $a$) sensing and estimating a severe attenuation for the Ka-band satellite $1$, and $b$) knowing the free-space path loss for satellites $1$ and $S$, the decision-maker calculates that the optimal solution is to serve cell $1$ with satellite $S$ and the rest of the cells with satellite $1$.} Finally, in the communication phase, all the satellites serve the \glspl{ue} in their allocated cells.
Naturally, the allocation of the sensing and \gls{ra} phases must be tailored to the dynamics of the environment, as sensing and \gls{ra} phases must be scheduled to account for the changes in the atmospheric effects and the \glspl{gsl} conditions over time. Furthermore, as Fig.~\ref{fig:toy} shows, a real-time implementation of the \gls{ra} must consider the propagation delays for relaying the sensing data to the decision-makers and the result to the satellites, as well as the delay for computing the \gls{ra}.
To the best of our knowledge, this is the first work that comprehensively addresses the coexistence between atmospheric sensing and communication in a \gls{leo} \gls{ntn} with heterogeneous satellites. The main contributions of this work are: 
\begin{enumerate}
    \item We build a case for \gls{isac} in \gls{leo} \glspl{ntn} by presenting a framework \change{for enabling and exploiting atmospheric sensing} for continent-wide \change{sensing-assisted} communications with multiple satellites. The framework considers both the time needed to transmit the sensing data and the time to find the optimal resource allocation based on the sensed data. To the best of our knowledge, this is the first study for \change{sensing-assisted} communications that considers heterogeneous satellites with different altitudes and operating at vastly diverse frequency bands.
    
    \item In this setting, we \change{tackle the joint satellite-to-cell association (i.e., matching) and \gls{ra} tasks by formulating and solving an optimization problem that aims to guarantee a proportionally fair solution accounting for the service interruption time due to 5G \gls{ntn} handovers. To solve this joint matching and \gls{ra} problem, we combine continuous relaxation, successive convex approximation, and the iterative method of multipliers to achieve an increase in Jain's fairness index up to $700\%$ when compared to a disjoint approach where the matching and \gls{ra} problems are solved independently.}
    
    \item We present a thorough performance analysis of our framework under practical timing constraints. In particular, we show that the immediate use of the sensed data for resource allocation is infeasible due to propagation and processing delays and evaluate the performance loss due to realistic processing times when compared to an idealistic case where these delays are not considered.
    
    \item We illustrate the trade-off between sensing accuracy and communication performance in a scenario with a realistic spatial-temporal rain model. \change{Specifically, it is observed that the average per-user throughput with a practical pilot length of $2^{12}$ is only $1\%$ lower than the idealistic case with perfect \gls{csi} and over $14\%$ higher than without atmospheric sensing.}
\end{enumerate}
The present paper greatly extends our preliminary work on \gls{ra} in \gls{leo} \gls{ntn}~\cite{Mayorga2023icc} by considering heterogeneous satellites, including a sensing mechanism, designing an \gls{isac} frame, and accurately accounting for the handover interruption time.

\paragraph*{Paper outline} 
Sec.~\ref{sec:related_work} presents the related work. Next, Sec.~\ref{sec:system_model} presents the system model of the \gls{ntn}. \change{Sec.~\ref{sec:protocol} describes the \gls{ntn} \gls{isac} data-frame enabling atmospheric sensing integration.} Sec.~\ref{sec:problem} addresses the \gls{ntn} adaptation, including atmospheric-induced attenuation estimation \change{method and \gls{ra}}. Finally, Sec.~\ref{sec:performance} outlines the performance analysis methodology, Sec.~\ref{sec:results} presents the results, and Sec.~\ref{sec:conclusions} concludes the paper.

\section{Related work}
\label{sec:related_work}
 Most of the work on beam management and allocation in downlink satellite communications considers a single satellite~\cite{Deyi2021, Tang21} and the focus has been on traditional sum rate maximization problems~\cite{Gao21}. Nevertheless, in our preliminary work~\cite{Mayorga2023icc}, we observed that solving the beam management (i.e., switching) and resource allocation problems at each satellite separately leads to a significant degradation either in performance and/or fairness of the solution when compared to a joint optimization that considers all the satellites. 

One of the few works that considers beam management and downlink resource allocation over multiple satellites is~\cite{Lei24}. The latter considers single- beam satellites that can generate different beam patterns and the pattern selection, user-to-satellite association, and resource allocation problems were solved jointly~\cite{Lei24}. While this approach is interesting, it does not fit into the cell classification proposed by the \gls{3gpp} and can be problematic for mobility management. 

Besides solving the problems jointly, it is necessary to consider the cost of adaptation in terms of signaling overhead and mobility management operations, as cell handovers lead to service interruption times. In terrestrial networks, handover interruption times can be as long as $49.5$~ms~\cite{TR36.881}, and handovers in \glspl{ntn} can have an even greater negative impact due to the long propagation delays and the mobility of the infrastructure. Therefore, improvements to the standard handover procedure have been proposed for 5G \gls{ntn}~\cite{3GPPTR38.821}, including the conditional handover and RACH-less handover~\cite{3GPPTR38.821}. Nevertheless, the unavoidable handover of a large number of users is a major challenge in both types of cells: moving and quasi-Earth fixed cells, and further enhancements on the conditional handover are based on triggering the procedure based on the location of the \glspl{ue} or relative antenna gain~\cite{Juan2022}. Despite these enhancements, recent studies by the \gls{3gpp} have indicated that the interruption time for \gls{ntn} is, in the best case, in the order of 2 \glspl{rtt} for downlink communication~\cite{3GPPTR38.821}. 

Furthermore, \gls{ra} must consider the resources dedicated to acquire the desired \gls{csi}. Statistical \gls{csi} can be used for resource allocation with multiple satellites in a decentralized manner. However, the available \gls{ra} schemes based on statistical \gls{csi} require having satellites with strictly different channel qualities (i.e., antenna gains), which is not typical in practical \glspl{ntn}~\cite{Zhao24}. Moreover, their scalability is still a concern, as results are limited to a few satellites only. 
Instead of relying on statistical \gls{csi}, exploiting the predictable dynamics of the \gls{leo} satellites while refining the \gls{csi} by sensing the atmospheric effects is an appealing option. The concept of \emph{opportunistic sensing} of atmospheric phenomena was originally proposed for terrestrial wireless links~\cite{Messer2006:cml-sensing} through the simple, but effective, idea of inverting the \emph{power law formula} linking the rainfall rate with its induced attenuation~\cite{Olsen1978rain}. By measuring attenuation, it is possible to detect a rainfall event and estimate its intensity. However, this requires knowing baseline \gls{rssi} level in \emph{clear sky conditions} and accounting for other factors influencing received power, such as beam misalignment and other weather-induced attenuation~\cite{gritton2005atmospheric, ITU-gasAttenuation, Zinevich2010, Ostrometzky2018baselinelevel}. 

Recently, the research on opportunistic sensing expanded to include satellite communications~\cite{Saggese2022rain}.
The authors of \cite{barthes2013basic} pioneered opportunistic sensing with \gls{geo} satellites, proposing the collection of \gls{rssi} measurements through time and comparing them to a baseline \gls{rssi} value from historical data.
In~\cite{Giannetti2017nefocast}, the authors employ two Kalman filters tracking \emph{slow} and \emph{fast} variations of the \gls{rssi}, respectively, and a rain event is detected when the difference between the outcomes exceed a threshold. 
In~\cite{Arslan2018ber}, the \gls{ber} is constantly measured and logistic regression is used to detect rain periods; the rain intensity is evaluated similarly as before.
\cite{Gharanjik2018ml} proposes an estimation approach based on feature extraction from the \gls{snr} measures, using neural networks trained with labeled data coming from rain gauges and meteorological radar. 

Similar approaches have been proposed for \gls{leo} constellations, addressing the \gls{gsl} variation due to satellite movement. In~\cite{Shen2019tomographic}, the rainfall intensity is estimated through the \gls{snr} measurements at ground receivers: through an iterative least-square-based estimation procedure, the rainfall-induced attenuation is isolated from other factors. Then, a 3D tomographic reconstruction of rainfall intensity is performed by combining data from multiple receivers. In~\cite{Jian2022compressive}, the authors propose a compressive sensing-based 3D rain field tomographic reconstruction algorithm proven to outperform least-square-based approaches. In~\cite{Xian2020:rain-svm-lstm}, the authors use a support vector machine algorithm to detect rain events, while performing a time-series analysis with long short-term memory neural networks to estimate the baseline \gls{rssi} value in clear sky condition. 
Nevertheless, none of these studies explore the coexistence of sensing and data communication in \gls{leo} \glspl{ntn}.



\section{System model}
\label{sec:system_model}

\subsection{Satellite constellation and cells on ground}
\label{sec:sat-cells}
We consider a \gls{dl} scenario in which a heterogeneous \gls{leo} satellite constellation with $S$ satellites that serve users on the ground through a direct satellite-to-user access link~\cite{TS38.300,Guidotti2018}. We limit our analysis to a fixed geographical region of the Earth and consider the quasi-Earth fixed cell scenario described by the \gls{3gpp}~\cite{3GPPTR38.821, TS38.300}. Therefore, the users within the region are aggregated into $C$ fixed and uniformly distributed geographical cells as shown in Fig.~\ref{fig:users_map}. The set of cells is denoted as $\set{C}$ and the total number of users in the cell is $M_c^\text{max}$. A fraction $\mu_c\in\left[0,1\right]$ of the users in cell $c\in\mc{C}$ are actively receiving data from the satellite network, such that the number of active users in cell $c\in\set{C}$ is $M_c=\left\lceil\mu_c M_c^\text{max}\right\rceil$. The remaining $M_c^\text{max}-M_c$ users are either inactive or communicating through the terrestrial infrastructure.

The set of satellites in the heterogeneous constellation is denoted as $\set{S}$, which is further divided into subsets of satellites $\set{S}_0, \set{S}_1, \dotsc$, which denote different \emph{orbital shells}, with each shell having different characteristics. Each satellite $s\in\set{S}_i$ operates at a given frequency band with carrier frequency $f_s$, bandwidth $B_s$, antenna gain $G_s$, are deployed at an altitude $h_s$, which is equal for all $s\in\set{S}_i$, and has a footprint covering up to $C_s$ cells.
At least one orbital shell operates at a carrier frequency in the K-band or higher, as operating in such high frequencies enables atmospheric sensing~\cite{Giannetti2017nefocast}. 
\change{Furthermore, the satellites in each orbital plane are evenly spaced across the orbit, have the same inclination angle, and have the same orbital velocity} 
\begin{equation}
v_s = \sqrt{\frac{\mathrm{G}M_E}{R_E+h_s}},
\end{equation}
where $\mathrm{G}$ is the universal gravitational constant, $M_E$ is the Earth's mass, and $R_E$ is the Earth's radius~\cite{SatBook}.
Finally, each satellite is equipped with a precise multi-beam system with $N_B$ independent beams that can be steered toward the intended cells throughout the satellite movement. The maximum communication range for satellite $s$ is defined from its altitude $h_s$ and the minimum elevation angle $\eta$ as
\begin{equation}
d_s(\eta)=\sqrt{R_E^2\sin^2\left(\eta\right)+2R_Eh_s+h_s^2}-R_E\sin\left(\eta\right).
\label{eq:range}
\end{equation}

\begin{table}[t]
    \centering
    \renewcommand{\arraystretch}{1.1}
    \caption{Important parameters and variables}
    \begin{tabularx}{\columnwidth}{@{}lXc@{}}
    \toprule   
    \multicolumn{2}{l}{\textbf{Parameter}} & \textbf{Symbol} \\
    \midrule
    \multicolumn{2}{l}{\textbf{Ground segment}}\\
        &Number of cells in the area & $C$\\
        &Set of cells in the area & $\set{C}$\\
        &Population for cell $c$ & $M_c^\text{max}$\\
        &Fraction of active users per in cell $c$ & $\mu_c$\\
        &Number of active users in cell $c$ & $M_c$\\
    \multicolumn{2}{l}{\textbf{Space segment}}\\
        &No. of satellites in the area at frame $k$  & $S(k)$\\
        &Set of satellites in the area at frame $k$ & $\set{S}(k)$\\
        & Number of beams per satellite & $N_B$\\        
        &Altitude of satellite $s$& $h_s$\\
        &Bandwidth per beam for satellite $s$ & $B_s$\\
        &Carrier frequency for satellite $s$ & $f_s$\\
        &No. of cells covered by satellite $s$ footprint & $C_s$ \\
        &Set of cells served by satellite $s$ in frame $k$ & $\mathcal{E}_s(k)$ \\
      \multicolumn{2}{l}{\textbf{Timing}}\\
      & Duration of a system frame & $T_F$\\
      & Duration of an \gls{ofdma} frame & $T$\\
      & No. of \gls{ofdma} frames per system frame &$N_T$\\      
      & No. of \gls{ofdma} frames for communication per system frame& $N_C$ \\
      & No. of \gls{ofdma} frames for sensing per system frame& $N_S$ \\
      & Frame index & $k$\\           
    \multicolumn{2}{l}{\textbf{Ground-to-satellite links}}\\
        & Euclidean distance for link $s-c$ in frame $k$ & $d_{s,c}(k)$\\
        & Free-space path loss for link $s-c$ in frame $k$ & $\set{L}_{s,c}(k)$\\
        & Atmospheric attenuation for link $s-c$ in frame $k$ &$A_{s,c}(k)$\\
        & \gls{snr} for link $s-c$ in frame $k$&  $\gamma_{s,c}(k)$\\
        \multicolumn{2}{l}{\textbf{\Gls{snr} estimation}}\\
        & Length of the pilot signals (symbols) & $L_p$\\
        &Estimated \gls{snr} link $s-c$ in frame $k$ & $\hat{\gamma}_{s,c}(k)$\\
         \bottomrule
    \end{tabularx}
    \label{tab:symbols}
\end{table}

\subsection{Discrete timing model}
\label{sec:frame1}
We consider a \change{sensing-assisted satellite communications system whose operation is divided into system frames of duration $T_F$. Under this frame structure, the network dynamics are modeled following a discrete timing model, whose minimum time unit is the system frame $T_F$, referred to as frame for simplicity and indexed by $k\in\{0, 1, \dots\}$. In our discrete timing model, the changes in the propagation environment due to the satellite movement (described previously) and atmospheric dynamics (described in Section~\ref{sec:rain}) occur at the beginning of each system frame. Consequently, the propagation environment can be considered static within each frame.}  Within each frame, the \gls{ntn} is adapted to provide a certain \gls{qos} to the \glspl{ue}. In particular, each satellite $s$ among the subset of satellites covering the area at frame $k$, namely $\mathcal{S}(k)\subseteq \mathcal{S}$, having cardinality $S(k)$, is assigned to cover a subset of the cells in the area. 

To comply with the \gls{3gpp} standards for \gls{ntn} 5G \gls{nr}~\cite{3GPPTR38.811, 3GPPTR38.821}, the system frame $T_F$ is a multiple of the duration of an \gls{ofdma} frame $T$, i.e., $T_F = N_T T$, \change{the latter representing the minimum amount of time that satellite $s$ can allocate bandwidth resources to serve the \glspl{ue} in cell $c$.} Furthermore, we assume that every satellite can steer each of its $N_B$ beams once per \gls{ofdma} frame, to align them to specific ground cells. Among the $N_T$ \gls{ofdma} frames, $N_C \le N_T$ are considered available for communication, while \change{$N_S\le N_T$} are available for sensing. \change{Therefore, the network resources are allocated across time (OFDMA frames), frequency (bandwidth), and space (beams) domains.}
Finally, we assume that the \gls{ofdma} frames used for data transmission, along with the multiple resource blocks within each \gls{ofdma} frame, belonging to the same cell and satellite beam pair are allocated uniformly to the users within the cell.
The details on the \gls{isac} frame structure are given in Sec.~\ref{sec:protocol}.

\begin{figure}[t]
\flushright
\subfloat[]{\includegraphics[width=0.9\columnwidth]{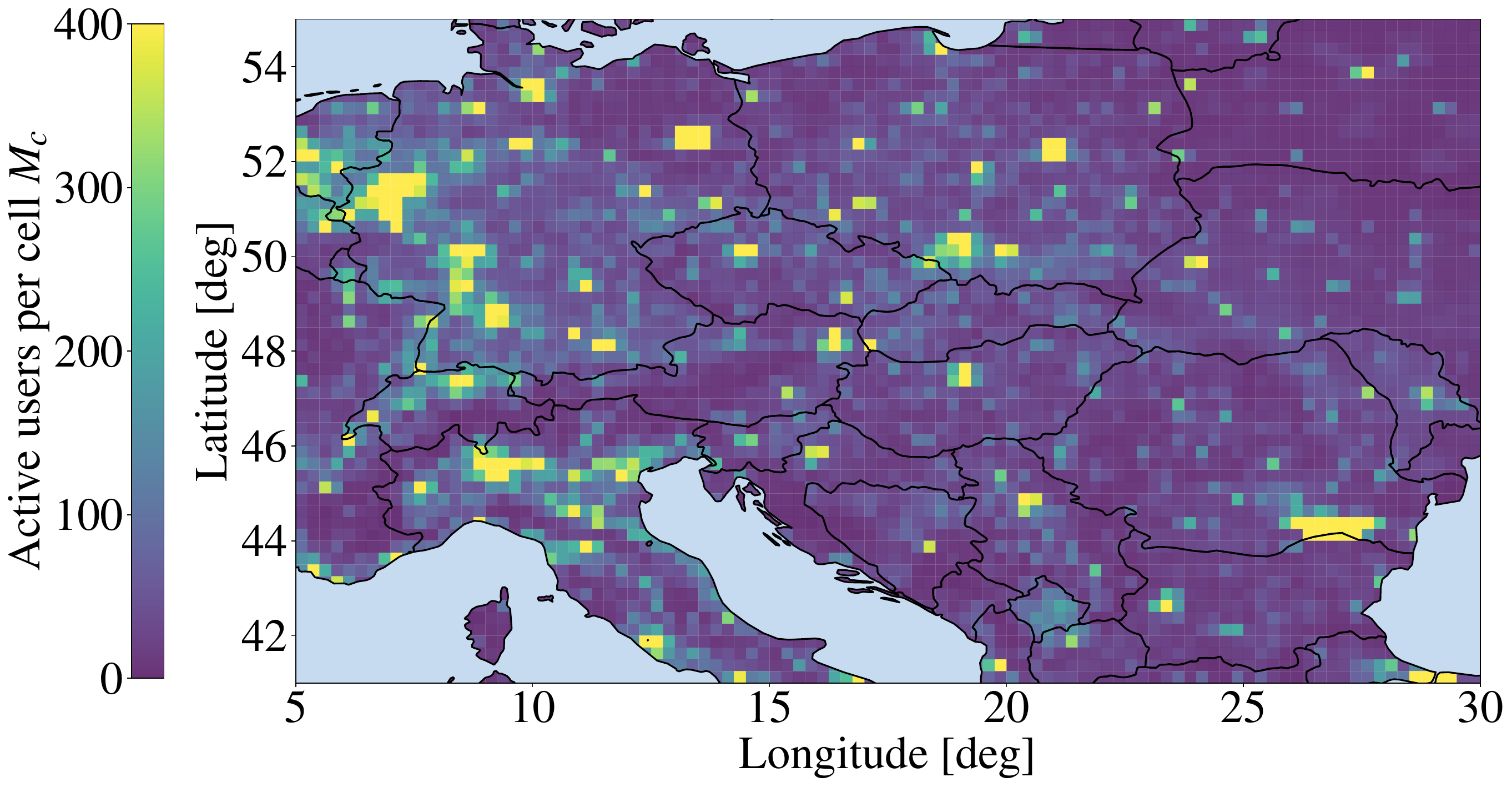}\hspace{0.05\columnwidth}\vspace{-0.5em}\label{fig:users_map}}
\\
\subfloat[]{\includegraphics[width=0.885\columnwidth]{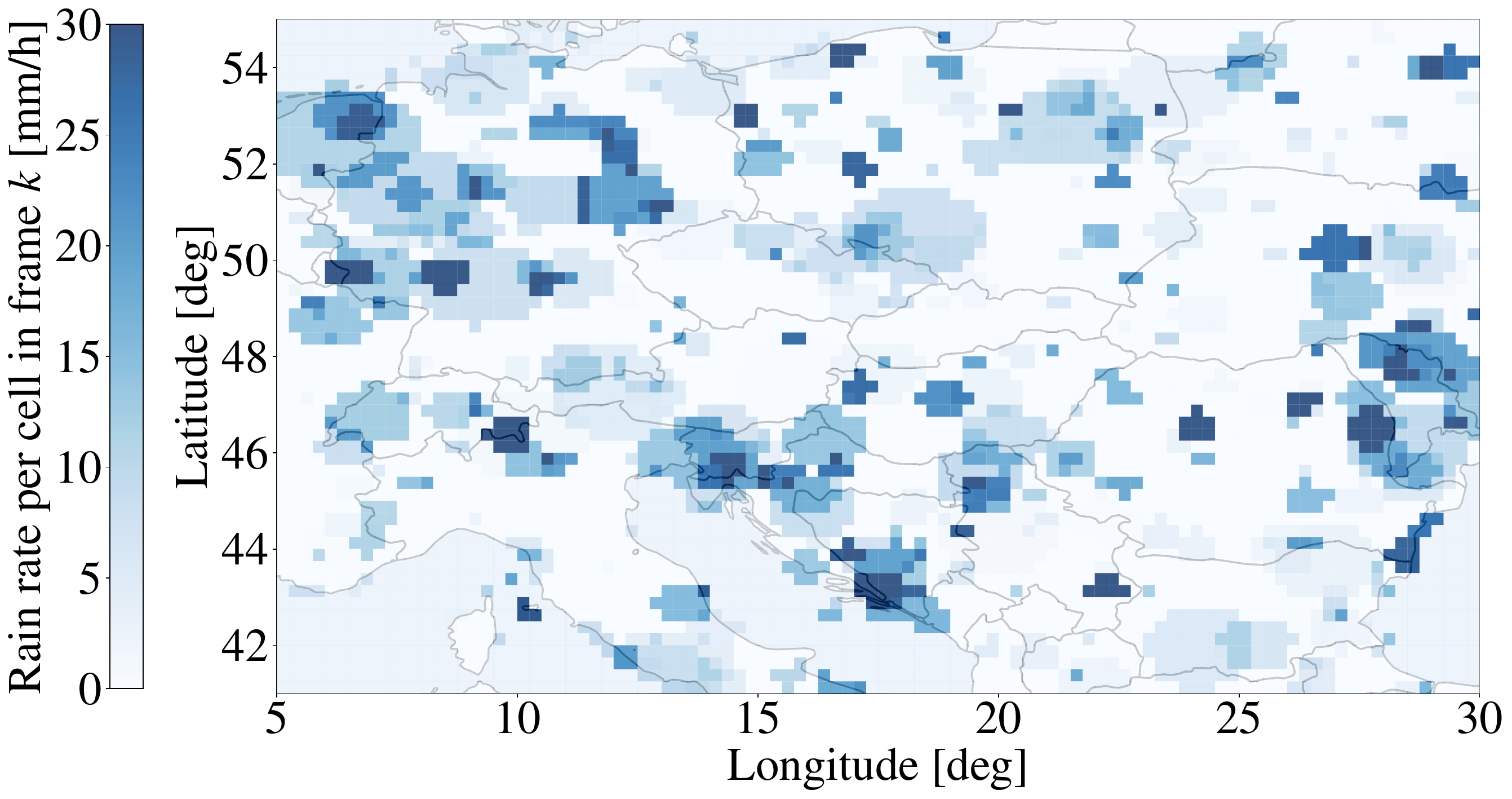}\hspace{0.05\columnwidth}\vspace{-0.5em}\label{fig:rain_map}}
\caption{Map of central Europe illustrating (a) the active users per-cell as in Sec.~\ref{sec:sat-cells} (derived from the per-cell population~\cite{CIESIN}) and (b) the rain rate with the clustered rain model of Sec.~\ref{sec:rain}.}
\label{fig:maps}
\end{figure}

\subsection{\change{Ground-to-Satellite} links}
\label{sec:gsl}
\Gls{dl} communication takes place through direct satellite-to-user \gls{awgn} channel links, where the attenuation \change{of the transmitted signal} is given by the free-space path loss and the atmospheric conditions. 
 
To achieve reliable communication to all the users within the area, we consider the problem of beam-cell allocation using the worst-case performance for a user in each cell $c\in\set{C}$. Therefore, the free-space path loss between cell $c$ and a satellite $s\in\set{S}$ at the $k$-th frame is given by 
\begin{equation}
   \set{L}_{s,c}(k) = \left(4 \pi d_{s,c}(k) f_s\right)^2 v_c^{-2} , 
   \label{eq:loss}
\end{equation}
where $d_{s,c}(k)$ is the maximum distance between satellite $s$ and any point in cell $c$ at frame $k$, and $v_c$ is the speed of light. Being the cell size much smaller than the altitude of the satellites, only a small difference between the minimum and maximum attenuation within a cell is expected. In our experiments, this value was always below $0.3$~dB.

Next, let $P_s$ be the transmission power of satellite $s$, $G_{c,s}$ be the antenna gain of the users in cell $c$ towards satellite $s$, $\sigma^2$ denote the noise power at the users' receivers, and $\ell$ be the pointing loss. Moreover, let $A_{s,c}(k)$ be the atmospheric attenuation generated by the presence of water particles affecting the link between cell $c$ and satellite $s$ at frame $k$. 
Thus, the \gls{snr} from a cell $c$ to satellite $s$ at frame $k$ is
\begin{equation}
    \gamma_{s,c}(k)=\frac{P_s G_{c,s} G_{s}}{\set{L}_{s,c}(k) A_{s,c}(k)\,\ell\,\sigma^2}.
    \label{eq:snr}
\end{equation}
The data rate for frame $k$ is selected based on the \gls{snr} $\gamma_{s,c}(k)$ at the beginning of the frame and remains fixed until the next frame, considering an \gls{awgn} channel and a sufficiently large blocklength such that the finite blocklength regime effects are negligible. Furthermore, to avoid link outages within the individual frames, we limit the communication to satellite-cell pairs that remain within communication range throughout the whole duration of the frame, i.e., those fulfilling $d_{s,c}^\mathrm{max}(k) = \max\{d_{s,c}(k), d_{s,c}(k+1)\}\leq d_s(\eta)$. Accordingly, the maximum achievable data rate for reliable communication for any user in a satellite-cell pair $(s,c)$ in frame $k$ is 
\begin{equation}   
    \rho_{s,c}(k)  \hspace{-.7mm}= \hspace{-.7mm}
    \begin{cases} 
    B_s\log_2\left(1+\gamma_{s,c}(k)\right), &\text{if } d_{s,c}^\mathrm{max}(k) \leq d_s(\eta),\\
    0, &\text{otherwise}.
    \end{cases}
    \label{eq:achievable-rate}
\end{equation}

\subsection{Atmospheric attenuation} 
\label{sec:rain}
In this paper, we consider that sensing is performed between the satellites and anchor nodes with known positions. These can either be dedicated gateways or mobile users. Building on this, and since the distance between the satellites and anchor nodes can be accurately calculated (and predicted) from the ephemeris, the attenuation due to atmospheric phenomena is the only unknown in~\eqref{eq:snr} that prevents a perfect rate selection.
This atmospheric attenuation at frame $k$, $A_{s,c}(k)$, depends on three main effects related to the presence of water molecules~\cite{Ostrometzky2018baselinelevel}: $i$) attenuation due to wet antenna effect generated by water accumulated on the antennas’ radomes or reflectors, which is related to the amount of rain but largely independent of the instantaneous rain rate~\cite{Kharadly2001wetantenna}; $ii$) attenuation due to rainfall impacting on the \gls{gsl}; and, $iii$) the attenuation due to other-than-rain phenomena, which comprises fog, water vapor, humidity, and atmospheric gasses, and depends mainly on the frequency of the signal~\cite{Ostrometzky2018baselinelevel, ITU-gasAttenuation}.
In the remainder of the paper, we consider the simplifying assumption that the attenuation due to wet antenna effects and other-than-rain phenomena are negligible\footnote{We remark that many solutions are available to retrieve the rainfall attenuation from the total atmospheric attenuation, e.g.,~\cite{Zinevich2010, Giannetti2017nefocast, Jian2022compressive}.}, and, thus, the atmospheric attenuation $A_{s,c}(k)$ in~\eqref{eq:snr} is only caused by rain phenomena. Accordingly, we assume that $A_{s,c}(k)=1$ if the \gls{gsl} experiences no rain. Otherwise, the atmospheric attenuation in the presence of uniform rain within a cell\footnote{We implicitly assume that the rainfall is approximately constant in a cell, and that the slanted path of the \gls{gsl} does not pass over other cells below the rain height. These two assumption are not required for the proposed design to work, but are used to simplify the implementation of the simulations.}, given in dB, can be closely approximated by the \emph{power law}, stating~\cite{Olsen1978rain} 
\begin{equation} \label{eq:atm-loss}
    A_{s,c}^\text{dB}(k) = \mu_s \, \left[\varrho_c\left(k\right)\right]^{\omega_s} \tilde{d}_{s,c}(k), \,\, \text{[dB]}
\end{equation}
where $\mu_s$ and $\omega_s$ are empirical coefficients which depends on the carrier frequency $f_s$ and on the polarization of the signal transmitted by the satellite $s$~\cite{ITU-rain, Zhao2001}; $\varrho_c(k)$ is the rainfall intensity in mm/h for cell $c$; $\tilde{d}_{s,c}(k)$ is the distance along the line-of sight path between the ground terminal in cell $c$ and the rain height, i.e., the 0$^\circ$ C isotherm height in km~\cite{ITU-rainheight}.

We consider a clustered model for the rain, adapted from~\cite{Cow95}, where the number of \emph{rain cells} $N_\text{rain}$ and their centers $z_r$ for $r\in\left\{1,2,\dotsc,N_\text{rain}\right\}$ are defined by a two-dimensional \gls{ppp} with intensity $\lambda_\text{rain}$ cells/\SI{}{\kilo\metre\squared}. The radius, intensity, duration, and inter-arrival times of the individual rain cells are exponentially distributed with parameters $d_\text{rain}$~km, $\overline{\varrho}$~mm/h, $\varepsilon$~h, and $\beta$~h, respectively.

In our discrete time model, illustrated in~\ref{fig:rain_model}, the changes in the atmospheric attenuation due to rain occur at the beginning of a frame $k$ and remain fixed until the beginning of frame $k+1$, analogous to block fading models. Thus, we adapt the original continuous time rain model~\cite{Cow95} to a discrete timing model, where the on-off process of the rain cells is a two-state \gls{dtmc}. Let $\alpha_r(k)\in\{0,1\}$ be an indicator variable that takes the value of $1$ if rain cell $r$ is active at frame $k$ and $0$ otherwise. Consequently, the probability that an active rain cell at time $k-1$ becomes inactive  at frame $k$ is 
\begin{equation}
p_\text{off}=  \Pr\left(\alpha_r(k)=0\!\mid\! \alpha_r(k-1)=1\right) = 1-e^{-T_F/\varepsilon}.
\end{equation}
Analogously, the probability that an inactive rain cell at frame $k-1$ becomes active at frame $k$ is 
\begin{equation}
    p_\text{on}=\Pr\left(\alpha_r(k)=1\!\mid\! \alpha_r(k-1)=0\right) =\! 1-e^{-T_F/\beta}.
\end{equation}
Building on these, it is easy to calculate the steady state probabilities of a rain cell as 
\begin{equation}
\begin{aligned}
    \pi_\text{on} &= \frac{p_\text{on}}{p_\text{on}+p_\text{off}}=\frac{1-e^{-T_F/\beta}}{2-e^{-T_F/\beta}-e^{-T_F/\varepsilon}}=1-\pi_\text{off}. 
\end{aligned}    
\end{equation}

Next, let $d_{r,c}$ be the distance between rain cell $r$ and cell $c$, $\varrho_r$ be the rain rate at rain cell $r$, and $\phi_r$ be the diameter of rain cell $r$.
The instantaneous rain rate for an active rain cell $r$ is exponentially distributed as
    $\Pr\left(\varrho_r\leq \varrho\right)= 1- e^{\varrho/\overline{\varrho}}$. Finally, we define $\mathcal{R}_c=\left\{r\in\left[N_\text{rain}\right]:d_{r,c}\leq \phi_r/2\right\}$ to be the set of rain cells that have cell $c$ within their radii, and calculate the intensity of the rain for a given cell $r_c$ as~\cite{Cow95}
\begin{equation}
    \varrho_c(k) = \sum_{r\in\mathcal{R}_c} \alpha_r(k)\varrho_r.
\end{equation}

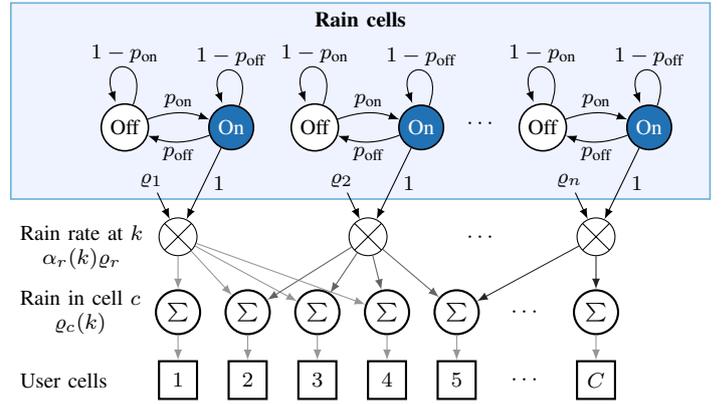
\begin{figure}[t]
    \centering
    \tikzset{usercell/.style={rectangle, fill=none, draw=Greys-M, thick, minimum size=0.5cm},
rain_cell/.style={minimum size=0.5cm, circle, inner sep=2pt, draw=black, semithick},
cross/.style={path picture={ 
  \draw[black,-]
(path picture bounding box.south east) -- (path picture bounding box.north west) (path picture bounding box.south west) -- (path picture bounding box.north east);
}}}
\begin{tikzpicture}[->, >=latex, font=\footnotesize]
\pgfmathsetmacro{\hrain}{-1.5}
\pgfmathsetmacro{\xm}{2.5}
\pgfmathsetmacro{\xml}{3.2}
\pgfmathsetmacro{\hcells}{-3.4}
\pgfmathsetmacro{\hsum}{-2.5}
\pgfmathsetmacro{\xcelln}{7}
\pgfmathsetmacro{\xcells}{(\xml-1)*\xm/(\xcelln-1)}
\pgfmathsetmacro{\xlabel}{-2.1}
\pgfmathsetmacro{\xmsep}{0.7}
\pgfmathsetmacro{\toptop}{1.6}
\pgfmathsetmacro{\hmarkov}{-0.05}

\filldraw[fill=Blues-B, draw=Blues-G, semithick] (\xlabel+0.1,\toptop) rectangle (6.72,\hrain+0.5); 
\node at({(7+\xlabel)/2},\toptop)[anchor=north] {\textbf{Rain cells}};

 \node [draw,circle, cross, minimum size=0.5cm](1) at (0,\hrain){}; 

 \node [draw,circle, cross,minimum size=0.5cm](2) at (\xm,\hrain){}; 
  \node [draw,circle, cross,minimum size=0.5cm](n) at (\xml*\xm-\xm,\hrain){}; 
 
\node (3) at(\xm*1.6,\hrain) {$\cdots$};

\foreach \i in {1,2}{
\node[rain_cell]   (A\i) at (\xm*\i-\xmsep-\xm,\hmarkov)[fill=white]   {Off} ;
\node[rain_cell]    (B\i) at (\xm*\i+\xmsep-\xm,\hmarkov)[fill=Blues-J]   {\textcolor{white}{On}};
\path
 (A\i) edge[bend left=20]     node[above, inner sep=2pt]{$p_\text{on}$}         (B\i)
 (A\i) edge[in=110,out=70,loop]	    node[pos=0.5,above, inner sep=2pt]{$1-p_\text{on}$}         (A\i)
 (B\i) edge[bend left=20]     node[below, inner sep=2pt]{$p_\text{off}$}         (A\i)
 (B\i) edge[in=110,out=70,loop]	    node[pos=0.5,above, inner sep=2pt]{$1-p_\text{off}$}         (B\i)
;
\draw[->] (B\i)--(\i)node[midway,right]{$1$};
\node[ inner sep=2pt] (rho\i)  at(\xm*\i-\xmsep/2-\xm,\hrain/2){$\varrho_\i$};
\draw[->] (rho\i)--(\i);
}
\node[rain_cell]   (A3) at (\xm*\xml-\xmsep-\xm,\hmarkov)[fill=white]   {Off} ;
\node[rain_cell]    (B3) at (\xm*\xml+\xmsep-\xm,\hmarkov)[fill=Blues-J]   {\textcolor{white}{On}};
\path
 (A3) edge[bend left=20]     node[above, inner sep=2pt]{$p_\text{on}$}         (B3)
 (A3) edge[in=110,out=70,loop]	    node[pos=0.5,above, inner sep=2pt]{$1-p_\text{on}$}         (A3)
 (B3) edge[bend left=20]     node[below, inner sep=2pt]{$p_\text{off}$}         (A3)
 (B3) edge[in=110,out=70,loop]	    node[pos=0.5,above, inner sep=2pt]{$1-p_\text{off}$}         (B3)
;
\node at(\xm*1.6,0) {$\cdots$};

\draw[->] (B3)--(n)node[midway,right]{$1$};
\node[inner sep=2pt] (rhon)  at(\xm*\xml-\xmsep/2-\xm,\hrain/2){$\varrho_n$};
\draw[->] (rhon)--(n);

\node at(\xlabel,\hrain)[anchor=west, align=center, yshift=-4pt] {Rain rate at $k$\\$\alpha_r(k)\varrho_r$};

\node at(\xlabel,\hsum)[anchor=west, align=center] {Rain in cell $c$ \\$\varrho_c(k)$};
\node at(\xlabel,\hcells)[anchor=west] {User cells};
\foreach \i in {1,2,...,5}{
\node[circle,draw=black, thick, inner sep=2]    (s\i) at (\i*\xcells-\xcells,\hsum) {\scriptsize $\sum$};
\node[usercell]    (c\i) at (\i*\xcells-\xcells,\hcells) {$\i$};
\draw[->, Greys-G] (s\i)--(c\i.north);}
\node at(5*\xcells,\hcells) {$\cdots$};
\node at(5*\xcells,\hsum) {$\cdots$};

\node[circle,draw=black, thick, inner sep=2]    (s\xcelln) at (\xcelln*\xcells-\xcells,\hsum) {\scriptsize $\sum$};
\node[usercell]    (c\xcelln) at (\xcelln*\xcells-\xcells,\hcells) {$C$};
\draw[->, Greys-G] (s\xcelln)--(c\xcelln.north);

\foreach \i in {1,2,3,4}{
\draw[->, Greys-G] (1)--(s\i);}

\foreach \i in {2,3,4,5}{
\draw[->,Greys-I] (2)--(s\i);}

\foreach \i in {5,7}{
\draw[->,Greys-K] (n)--(s\i);}
\end{tikzpicture}
    \vspace{-12pt}
    \caption{The rain rate at each user cell is the sum of rates of the active rain cells within range. The on-off process for the rain cells is a two-state \gls{dtmc}.}
    \vspace{-12pt}
    \label{fig:rain_model}
\end{figure}



\section{Non-terrestrial ISAC data-frame}
\label{sec:protocol}
\change{This section explores the structure of the proposed \gls{isac} data-frame, which facilitates the integration of sensing in \glspl{ntn} and serves as a foundation for developing atmospheric sensing and \gls{ra} algorithms. The data-frame consists of sub-frames dedicated to atmospheric sensing, \gls{ra}, and data communication. Generally, resources in the spatial, frequency, time, and computational domains are reserved for these sub-frames, and the same resources may or may not be shared with others, depending on system requirements. Given the \emph{causality} between sensing, \gls{ra}, and communication, we focus on analyzing the time-domain constraints that enable real-time network adaptation to the environmental dynamics.}

\change{The time-domain} design of a sensing and communication data frame depends on the time horizons of the two tasks. 
As mentioned in Sec.~\ref{sec:frame1}, the frame duration must be sufficiently short so that the \gls{snr} remains approximately constant within the frame, avoiding degradation of the expected communication performance. 
From~\eqref{eq:snr}, this condition holds when the large-scale channel conditions~\eqref{eq:loss}, determined by the satellite-to-cell distances and the atmospheric attenuation~\eqref{eq:atm-loss} remain approximately constant throughout the frame. 
In general, it is reasonable to assume that rain events change with an average time horizon of minutes or even hours~\cite{Giannetti2017nefocast, Cow95}. Thus, the variation of the attenuation caused by atmospheric effects is slower than the variation of the free space path loss. 
Therefore, $T_F$ must be sufficiently short so that all 
$\mc{L}_{s,c}$ remain approximately constant, which can be calculated from the cell distribution of the area of interest and the constellation parameters. 
Nevertheless, an extremely short frame might introduce an excessive amount of overhead due to highly frequent sensing and \gls{ra} \change{sub-frames consuming resources needed for communication}.

An important factor to dimension the \change{duration} of the sensing and \gls{ra} sub-frames is the propagation time. In particular, the \change{time employed by} these sub-frames must be sufficiently long to accommodate the maximum propagation time for all the cells within the communication range of the satellites, i.e., 
\begin{equation}
    t(\eta)=\max_{s}\frac{d_s(\eta)}{v_c}=\max_{s,c,k} \frac{d_{s,c}(k)}{v_c}.
    \label{eq:prop_time}
\end{equation}

According to the previous considerations, and following the assumptions of Sec.~\ref{sec:frame1}, we propose the \gls{isac} data frame shown in Fig.~\ref{fig:time-diagram}. Each system frame $k$ has duration $T_F = N_T T$ and addresses atmospheric sensing, \gls{ra} and data communication, as described in the following subsections.


\begin{figure}
    \centering
    \begin{subfigure}{\columnwidth}
        \centering
         \includegraphics[width=\columnwidth]{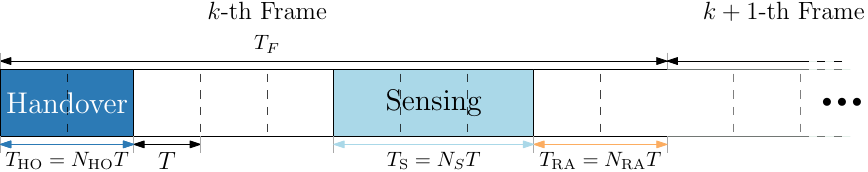}
         \caption{Overall \gls{isac} frame.}
        \label{fig:time-diagram}         
    \end{subfigure}
    \begin{subfigure}{\columnwidth}
        \centering
        \includegraphics[height=1.7cm]{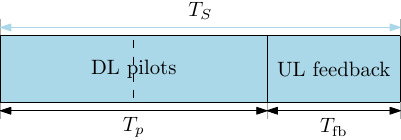}    
        \caption{Sensing sub-frame.}
        \label{fig:sensing-frame}
    \end{subfigure}
    \\[1em]
    \begin{subfigure}{\columnwidth}
        \centering
        \includegraphics[height=1.7cm]{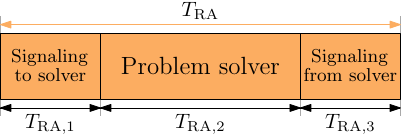}    
        \caption{\gls{ra} sub-frame.}
        \label{fig:ra-frame}
    \end{subfigure}
    \caption{Visualization of a generic $k$-th \gls{isac} frame for a $(s,c)$ pair, comprising communication, sensing sub-frame, and \gls{ra} process, with $T_F = 10 T$, $N_S = 3$ and $N_\mathrm{HO} = 2 $. The handover on frame $k+1$-th is missing to show a situation where the $(s,c)$ pair does not change from $k$ to $k+1$.}
    \label{fig:frame}
\end{figure}

\subsection{Sensing sub-frame}
\label{sec:sensing-frame}
A \emph{sensing sub-frame} of duration $T_S = N_S T \le T_F$ is considered for atmospheric sensing. 
\change{Satellites operating in the K-band or higher perform this task by transmitting pilot sequences to cells within their footprint, while lower-band satellites, unaffected by water particles, are not involved.}

\change{In this paper, we assume that an anchor node placed in each cell -- e.g., a dedicated ground station, gateway, or a mobile user with a known position -- estimates the \gls{snr} as a metric related to \gls{gsl} conditions and reports it to the associated satellite\footnote{\change{Similar considerations can be drawn for multiple ground nodes performing the sensing process, eventually using different metrics, e.g.,~\gls{rssi}~\cite{Giannetti2017nefocast, Saggese2022rain}. Here, we focus on a anchor node per cell for simplicity of presentation.}}. Using this report and its known position, the satellite can estimate the atmospheric attenuation, e.g., via the method of Sec.~\ref{sec:snr}. The sensing process must terminate before the \gls{ra} process starts.} Accordingly, the sensing sub-frame is divided into two parts, shown in Fig.~\ref{fig:sensing-frame}: \gls{dl} pilots and \gls{ul} feedback.

\change{To analyze the sensing sub-frame duration, let $C_s$ be the upper bound on the number of cells sensed by any satellite $s$, while $B_s$ and $N_B$ the bandwidth (frequency domain) and the number of beams (spatial domain) available for sensing. The following considerations apply also when subsets of frequency and spatial resources are reserved for sensing, and the remaining resources are simultaneously employed for communication.}

\change{During the \gls{dl} pilot phase, satellites transmit pilot sequences of length $L_p$ to the anchor nodes, which then estimate the \gls{kpi}.} 
The time needed to transmit a pilot to an anchor node is  $L_p/B_s$, assuming the symbol rate is equal to the available bandwidth. Since $C_s>N_B$, the pilots are sent in parallel to $N_B$ anchor nodes. Then, the beams are switched towards other anchor nodes a total of $\left\lceil C_s/N_B \right\rceil$ times, being $\lceil \cdot \rceil$ the ceiling operator. Finally, since pilot transmissions start at the beginning of the sensing sub-frame and these are affected by the propagation time, which is upper bounded by $t(\eta)$, the duration of the \gls{dl} pilot transmission phase is
\begin{equation} \label{eq:pilot}
   T_p = T\left\lceil\left(t(\eta)+\left\lceil \frac{C_s}{N_B} \right\rceil \cdot \frac{L_p}{B_s}\right)\middle/T\right\rceil.
\end{equation}

During the \gls{ul} feedback portion of the sub-frame, the anchor node reports the estimated \gls{snr} to the satellite. The feedback duration depends on the number of symbols to be transmitted $L_\text{fb}$ and the number of times that the $N_B$ beams must be switched to receive the feedback from the $C_s$ anchor nodes, similar to the \gls{dl} pilot case.
Accordingly, the duration of \gls{ul} feedback part can be formulated is, in the worst case,
\begin{equation} \label{eq:ul-feedback}
   T_\text{fb} = T\left\lceil\left(\left\lceil \frac{C_s}{N_B} \right\rceil \cdot \frac{L_\text{fb}}{B_s}+ t(\eta)\right)\middle/T\right\rceil.
\end{equation}
Finally, the length of the sensing sub-frame is calculated as
\begin{equation} \label{eq:sensing-time}
T_S = T_p + T_\text{fb},
\end{equation}
which comprises $N_S=T_S/T$ \gls{ofdma} frames.

\begin{figure}[t]
    \centering
    \includegraphics[width=0.7\columnwidth]{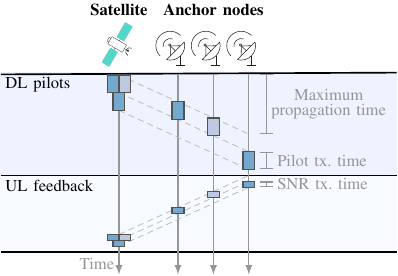}
    \caption{Exemplary time diagram for the sensing sub-frame with $C_s=3$ anchor nodes and $N_B=2$.}
    \vspace{-12pt}
    \label{fig:timediag_sensing}
\end{figure}

\subsection{\gls{ra} sub-frame}
\label{sec:ra-frame}

Based on the sensing outcome, the \gls{ra} optimization problem is solved in a \emph{centralized manner} by a single decision-maker -- e.g., a ground station -- and its solution is applied for the whole duration of the next frame. 
The complete \gls{ra} process has duration $T_\mathrm{RA} = N_\mathrm{RA} T$ and it is performed in parallel to data transmission, allowing the satellites to communicate with the \glspl{ue} while solving the optimization problem. 
The \gls{ra} process is comprised by three parts, as shown in Fig.~\ref{fig:ra-frame}: signaling to the decision-maker, problem solver, and signaling from the decision-maker.
In the first and third parts, the data in input and output to and from the \gls{ra} are forwarded to the decision-maker and to the satellites covering the area, respectively. These parts last for $T_{\mathrm{RA},1}$ and $T_{\mathrm{RA},3}$ seconds, whose values depend on the time needed to route the traffic between the satellites and the decision-maker. 
During the problem solver part, the decision-maker uses its computational resources to compute the solution of the \gls{ra} problem (cf. Sec.~\ref{sec:optimization}). This part last for $T_{\mathrm{RA},2}$ seconds, usually being the dominant term due to the complexity of the \gls{ra} problem. 
The \gls{ra} sub-frame must occur \emph{after} the sensing sub-frame to gather inputs for optimization, and handovers (detailed in Sec.~\ref{sec:comm-frame}). Moreover, while $T_\text{RA}$ depends on the algorithm complexity and the available hardware, the following condition must be fulfilled to achieve a real-time operation of the system.
\begin{equation} \label{eq:ra-time}
    T_\mathrm{RA} \le T_F - T_S - T_\mathrm{HO}.
\end{equation}

\subsection{Communication sub-frame}
\label{sec:comm-frame}

\change{The communication sub-frame spans the time, frequency and spatial domain resources not employed for sensing.}
At the beginning of each frame, a handover disconnection time due to switching between different satellite-cell associations is considered. During handover, the satellites steer their beam to transmit data and control signals to the cells to be served resulting from the \gls{ra} process. 
According to the \gls{ofdma} frame structure, the handover disconnection time is defined as $T_\mathrm{HO} = N_\mathrm{HO} T$, and occurs on a satellite-cell pair frame only if the satellite has switched covered cells from the previous system frame, as accounted by~\eqref{eq:cost} in the \gls{ra} problem.
Calculating the handover duration depends on the specific signaling needed to perform the handover and the maximum propagation time from ground to satellite $t(\eta)$. 
Thus, we will resort to test the system considering $T_\mathrm{HO}$ as a function of the round-trip time $2t(\eta)$ and the type of handover mechanism implemented. Namely,
\begin{equation}
    T_\text{HO} \geq T\left\lceil \left(N_\text{RTT}2\,t(\eta)\right)/T \right\rceil,
\end{equation}
where $N_\text{RTT}= 2$ for the traditional handover mechanism. 
Finally, data transmission occurs \change{on the resources} not used for sensing or communication overhead due to handover.

\section{Problem formulation}
\label{sec:problem}
To address the \gls{ntn} adaptation, we consider a joint satellite-to-cell matching and \gls{ra} problem, where the satellites must allocate resources to the cells in the area of interest to achieve an efficient and fair \gls{qos}. \change{We consider that sensing and communication tasks are orthogonal in the time domain. Therefore, no communication occurs during the $N_S$ \gls{ofdma} frames composing the sensing sub-frame (cf. Sec.~\ref{sec:sensing-frame}) and vice-versa. Moreover, spatial and frequency domain resources are fully available for sensing \emph{and} for communication tasks in their respective sub-frames. Therefore, we focus on allocating a portion of the $N_C = N_T - N_S$ remaining \gls{ofdma} frames and the $N_B$ beams to each satellite-cell pair, assuming the whole bandwidth $B_s$ is employed.}
The optimization objective is to maximize the fairness among the per-user data rates across a pre-defined area. Nevertheless, the optimal resource allocation problem requires to have perfect knowledge about the attenuation of all the satellite-to-cell links. Even though the satellite-to-cell distances can be accurately calculated from the ephemeris, there is uncertainty in the attenuation of the links due to unknown atmospheric conditions, namely, rain. Therefore, we decompose the problem into 1) \gls{snr} estimation through \gls{mle} \change{performed during the sensing sub-frame} and 2) optimal resource allocation for communication \change{performed during the \gls{ra} sub-frame}. 

\subsection{\gls{mle} for \gls{snr} and the \gls{crb}}
\label{sec:snr}
 \change{As described in Sec.~\ref{sec:sensing-frame},} each cell possesses an anchor node \change{performing} the \gls{snr} estimation \change{by processing the known pilot signals of length $L_p$ symbols transmitted by the satellites operating in K-band and above.} Therefore, the pilots transmitted in the \gls{dl} by the satellites are known by the anchor nodes and sensing takes place in the same frequency band $f_s$ used for \gls{dl} communication from satellite $s$.
 Let $m_i$, $z_i$, and $y_{i,s,c}$ be the $i$-th transmitted symbol, the complex sampled and zero-mean AWGN of unit variance, and the received symbol at cell $c$ from satellite $s$ in frame $k$, respectively. Then, the $i$-th received pilot symbol for cell $c$ is
 \begin{IEEEeqnarray}{rCl}
     y_{i,s,c} &=& m_i\sqrt{\gamma_{s,c}(k)\sigma^2} + z_i\sqrt{\sigma^2}.
  \end{IEEEeqnarray}
 The unbiased \gls{mle} for the \gls{snr} in a complex-valued channel, given that a known pilot signal of length $L_p$ is transmitted without upsampling at the receiver, is given as~\cite{Paluzzi2000}
\begin{equation}
    \hat{\gamma}_{s,c}(k)=\frac{\left(L_p-\frac{3}{2}\right)\left(\frac{1}{L_p}\sum_{i=1}^{L_p}\mathrm{Re}\left\{y_{i,s,c}^*\,m_i\right\}\right)^{\!\!2}}{\sum_{i=1}^{L_p}|y_{i,s,c}|^2\!-\!\frac{1}{L_p}\!\left(\sum_{i=1}^{L_p}\mathrm{Re}\left\{y_{i,s,c}^*\,m_i\right\}\!\right)^{\!\!2}}.
\end{equation}
Furthermore, the \gls{crb} for the \gls{snr} estimator in an \gls{awgn} channel is given as~\cite{Paluzzi2000}
\begin{equation}
    \text{var}\left(\hat{\gamma}_{s,c}(k)\right)\geq 3\gamma_{s,c}(k)/L_p.
    \label{eq:crb}
\end{equation}

The estimated \gls{snr} $\hat{\gamma}_{s,c}(k)$ can then be used to estimate the attenuation due to rain by inverting~\eqref{eq:snr} as
\begin{equation}
    \hat{A}_{s,c}(k)=\frac{P_s G_{c,s} G_{s}}{\set{L}_{s,c}(k) \hat{\gamma}_{s,c}(k)\,\ell\,\sigma^2}.
    \label{eq:att_est}
\end{equation}
Despite the \gls{mle} $\hat{\gamma}_{s,c}(k)$ being unbiased, the estimator $\hat{A}_{s,c}(k)$ is biased, as the second-order Taylor approximation gives
\begin{equation}
    \mathbb{E}\big[\hat{A}_{s,c}(k)\big]\approx A_{s,c}(k)\left(1+\text{var}\left(\hat{\gamma}_{s,c}(k)\right)\middle/\gamma_{s,c}(k)^2\right).
    \label{eq:2nd-order}
\end{equation}
%
%
Thus, we employ the bias correction method described in the Appendix, which uses the \gls{crb} for the \gls{snr} estimation to obtain the following estimator for the rain attenuation
\begin{equation} \label{eq:att-hathat}
    \begin{aligned}
        \hathat{A}_{s,c}(k) 
    =& \frac{P_s G_{c,s} G_{s}}{\set{L}_{s,c}(k)\ell\sigma^2\left(\hat{\gamma}_{s,c}(k)+3/L_p\right)}.
    \end{aligned}
\end{equation}

\subsection{\Gls{jmra} problem}
\label{sec:optimization}
To define the \gls{jmra} problem, let $x_{s,c}(k)\in\{0,\ldots, N_C\}$ be the number of \gls{ofdma} frames allocated for communication to the satellite-cell pair $(s,c)$ in frame $k$. Next, let $\mathbf{X}(k)\in\{0,\ldots,N_C N_B\}^{S(k)\times C}$ be the \gls{ra} matrix at frame $k$, whose $s$-th row is $\mathbf{x}_{s}(k)$ and its $(s,c)$-th element is $x_{s,c}(k)$. 

Then, to define a satellite-cell matching, we first define the adjacency indicator for cell $c$ and satellite $s$ at frame $k$ based on the definition of the Heaviside function:
$\alpha_{s,c}(k)=1$ when a satellite $s$ has allocated communication resources to cell $c$ in frame $k$, i.e., when $x_{s,c}(k)> 0$; otherwise, $\alpha_{s,c}(k)=0$. Next, we denote the set of cells served by satellite $s$ in frame $k$ as $\set{E}_s(k) = \{c:\alpha_{s,c}(k)=1\}$, where $|\set{E}_s(k)|\leq C_s$.

\begin{definition}\emph{Satellite-to-cell matching}.
A satellite-cell matching is a bipartite graph with vertex set $\set{S}(k)\cup\set{C}$ and edge set \begin{IEEEeqnarray*}{rCl}
    \set{E}(k)=\!\!\bigcup_{s\in\set{S}(k)} \!\!\set{E}_s(k)&=&\{(s,c):\alpha_{s,c}(k)=1,s\in\set{S}(k),c\in\set{C}\}
\end{IEEEeqnarray*}
\end{definition}

\change{To define the objective function, let $R_{s,c}(k)$ be the per-user throughput for cell $c \in \mathcal{C}$ served by satellite $s$ within frame $k$ for a given number of allocated \gls{ofdma} frames $x_{s,c}(k)$ and for a specific handover disconnection time 
\begin{equation} \label{eq:cost}
    H_{s,c}(k)=T_\text{HO}\left(1- \alpha_{s,c}\left(k-1\right)\right).
\end{equation}
Coherently with assumptions of Section~\ref{sec:comm-frame}, $H_{s,c}(k)$ is the time needed  for cell $c$ to complete the handover process in frame $k$ if satellite $s$ was not serving this cell in the previous frame. During this time, data communication is not possible~\cite{xie2021}.
Then, we calculate $R_{s,c}(k)$ as the product of the achievable rate $\rho_{s,c}(k)$ and the time available for data communication from satellite $s$ to the users in cell $c$, divided by the frame length $T_F$ and the number of active users $M_c$ as
\begin{equation}
    R_{s,c}(k) =\frac{\big(Tx_{s,c}(k)-H_{s,c}\left(k\right)\alpha_{s,c}(k)\big)\,\rho_{s,c}(k)}{ T_F M_c}.
    \label{eq:rate}
\end{equation}}

 
Our goal is to determine the values of $\mathbf{X}(k)$, the allocation of satellites' resources to the cells, to distribute them efficiently and fairly across the cells in the area while considering the cost of handovers. This is modeled as a load balancing problem, where the nominal data rate $\rho_{s,c}(k)$ for the individual cells is maximized when being served by the satellite with highest \gls{snr} $\gamma_{s,c}(k)$. However, due to the uneven geographical distribution of users in the cells, it might be convenient to connect to a satellite that is farther away but has more available resources or eliminates the need for a handover.

The following three constraints are defined for the resource allocation within one frame. First, one satellite can allocate up to $N_C$ \gls{ofdma} frames to each cell $c\in \mathcal{C}$. Second, each satellite cannot allocate more than $N_C N_B$ resources to the cells. Third, a cell $c \in \mathcal{C}$ can never be served by multiple satellites. This latter constraint defines the satellite-to-cell matching as a one-to-many matching problem. Therefore, up to one satellite can allocate between $1$ and $N_C$ \gls{ofdma} frames to a given cell $c$ and the rest of the satellites must allocate $0$ frames to the cell. With these constraints in place, we formulate the optimization problem as follows.
\begin{IEEEeqnarray}{CCll} \label{eq:opt_problem}
\mathcal{P}_1: \max_{\mathbf{X}(k)}&~&\IEEEeqnarraymulticol{2}{l}{\sum_{c\in\mathcal{C}}M_c\log\Bigg(1+\sum_{s\in\mathcal{S}}\!R_{s,c}(k)\!\Bigg),}\IEEEyesnumber*\IEEEeqnarraynumspace\label{eq:opt_problem_objective}\\[1em]
\text{subject to}&&
x_{s,c}(k)\in\{0,1,\dotsc, N_C\}, ~&\forall s,c\IEEEyessubnumber*\\[0.5em]
&&\displaystyle \sum_{c \in \mathcal{C}}x_{s,c}(k)\leq N_C N_B, &  \forall s \in \mathcal{S}(k),\IEEEyessubnumber*\\
&&\displaystyle\sum_{s \in \mathcal{S}(k)} \alpha_{s,c}\left(k\right)\in\{0,1\}, & \forall c \in \mathcal{C},~\IEEEyessubnumber*\IEEEeqnarraynumspace\label{eq:opt_problem_c3}
\end{IEEEeqnarray}
where the $\log(\cdot)$ function is applied to the objective function to promote a proportionally fair rate selection solution ~\cite{Mo2000:fair, Chen23}. \change{This proportional fairness is promoted by the logarithm in~\eqref{eq:opt_problem_objective}, which reduces the rate of growth of the objective as $R_{s,c}(k)$ increases. Consequently, increasing $x_{s,c}(k)$ for a cell $c$ with a low per-user throughput $R_{s,c}(k)$ results in a steeper increase in the objective when compared to increasing $x_{s,c}(k)$  for a cell with an already high $R_{s,c}(k)$.} Note that the value of $1$ is added inside the logarithm to achieve a feasible solution when $\sum_{s\in\mathcal{S}}\!R_{s,c}(k)=0$. \change{Furthermore, the logarithm is multiplied by the number of active users in the cells $c\in\mc{C}$ to account for the uneven distribution of users among the cells.}
 
 Finding a solution to $\mathcal{P}_1$ requires solving the multiple knapsack problem, which is NP-hard. The presence of $\alpha_{s,c}(k)$ inside $R_{s,c}(k)$ further complicates the problem since it is neither a convex nor a concave function of $x_{s,c}(k)$. Nevertheless, its presence is needed to avoid a negative value inside the logarithm due to $H_{s,c}(k)$. In addition, constraint~\eqref{eq:opt_problem_c3}, which ensures that each cell is connected to a single satellite, results in a one-to-many matching problem. Consequently, no efficient optimization algorithm exists to solve  problem $\mathcal{P}_1$.

\subsection{Proposed \gls{jmra} framework}\label{sec:jmra}
In the following, we reformulate problem $\mathcal{P}_1$ so its optimal solution can be closely approximated using a penalty method for convex optimization. This reformulation requires:
\begin{enumerate}
    \item Relaxing the objective and the first and second constraints to accommodate a real-valued optimization variable $\hat{\mathbf{X}}(k)\in\left[0,N_C\right]^{S(k)\times C}$ with its $(s,c)$-th element being $\hat{x}_{s,c}(k)$.
    \item Substituting $\alpha_{s,c}(k)$ with a continuous and convex approximation that eliminates the discontinuity of the Heaviside step function in $0$.
    \item Reformulating the optimization objective to include a \emph{penalty} (i.e., a set of weighting terms) that promotes the sparsity of the real-valued optimization variable $\hat{\mathbf{X}}(k)$ \change{by making~\eqref{eq:opt_problem_c3} an implicit constraint.}
\end{enumerate}

The continuous relaxation of $\set{P}_1$ to include $\hat{x}_{s,c}(k)$ as optimization variables is straightforward. Then, to approximate $\alpha_{s,c}(k)$, let us define the following inequality 
\begin{equation} 
    \frac{\hat{x}_{s,c}(k)}{\tau +\hat{x}_{s,c}(k)}\leq\alpha_{s,c}(k),
    \label{eq:ineq_alpha}
\end{equation}
where $0<\tau$. The left hand-side of~\eqref{eq:ineq_alpha} is a continuous \change{and convex} function of $\hat{x}_{s,c}(k)\geq 0$, which is a tight lower bound  for $\alpha_{s,c}(k)$ when  $\hat{x}_{s,c}(k)\gg \tau$. Next, to remove the optimization variable $\hat{x}_{s,c}$ from the denominator, we introduce an auxiliary variable $w_{s,c}\approx 1/(\tau+\hat{x}_{s,c}(k))$, which gives
\begin{equation}
w_{s,c}\,\hat{x}_{s,c}(k)\approx \alpha_{s,c}(k).
\label{eq:convex_approx}
\end{equation}
Finally, we substitute $\alpha_{s,c}(k)$ with $w_{s,c}\,\hat{x}_{s,c}(k)$ in~\eqref{eq:rate} to approximate $R_{s,c}(k)$ with the convex function
\begin{IEEEeqnarray}{c}
    \hat{R}_{s,c}(k) = \frac{\hat{x}_{s,c}(k)\,\rho_{s,c}(k)\left(T-H_{s,c}(k)\,w_{s,c}\right)}{ T_F M_c}\!\approx R_{s,c}(k).\IEEEeqnarraynumspace
\end{IEEEeqnarray}

As the final step, \change{we reformulate  $\set{P}_1$ using the \emph{augmented Lagrangian method}. This entails transforming the explicit constraint~\eqref{eq:opt_problem_c3} into an implicit constraint as the penalty for matching a cell to multiple satellites. Formulating the augmented Lagrangian requires the definition of a set of slack variables $\{\Sigma_c\}_{c\in\mc{C}}$ such that $\Sigma_c\in\{-1,0\}$ for all $c\in\mc{C}$ to transform the inequality constraint~\eqref{eq:opt_problem_c3} into an equality constraint using the approximation defined in~\eqref{eq:ineq_alpha} as}
\change{\begin{equation}
\sum_{s\in\mathcal{S}}w_{s,c}\,\hat{x}_{s,c}(k)+\Sigma_c=0, \quad\forall c\in\mc{C}.
\label{eq:c_convex_relax}
\end{equation}
Next, we define $\{\lambda_c\}_{c\in\mc{C}}$ to be the set of Lagrange multipliers and $\{p_c\}_{c\in\mc{C}}$ be the set of penalty terms for constraint~\eqref{eq:c_convex_relax}. Building on these, we redefine $\mc{P}_1$ into the convex problem}
\change{\begin{IEEEeqnarray}{CCll}
\!\mathcal{P}_2\!:\max_{\hat{\mathbf{X}}(k), \{\Sigma_c\}} & \IEEEeqnarraymulticol{2}{l}{\sum_{c\in\mathcal{C}}M_c\hspace{-.3mm}\log \!\Bigg(\hspace{-1mm}1+\hspace{-0.8mm}\sum_{s\in\mathcal{S}}\!\hat{R}_{s,c}(k)\!\hspace{-.8mm}\Bigg) \hspace{-.5mm}-\hspace{-.5mm} f_p\left(\hat{\mathbf{X}}(k),p_c\right)} 
 \IEEEeqnarraynumspace\label{eq:problem_cvx}\\[0.4em]
\text{subject to~}&
\displaystyle\!0\leq \hat{x}_{s,c}(k)\leq N_C, &~ \forall s \in \mathcal{S}(k),\, c\in\mathcal{C},\IEEEyessubnumber*\IEEEeqnarraynumspace\\[0.3em]
&\displaystyle \sum_{c \in \mathcal{C}}\hat{x}_{s,c}(k)\leq N_C N_B, & ~\forall s \in \mathcal{S}(k).\IEEEyessubnumber*\\
& -1\leq \Sigma_c\leq 0, & ~\forall c\in \mathcal{C},\IEEEyessubnumber*
\end{IEEEeqnarray}\vspace{-0.5em}}
\change{with a penalty function
\begin{equation} \label{eq:penalty}
\begin{aligned}
    f_p\left(\hat{\mathbf{X}}(k), p_c\right) = &~\frac{p_c}{2}\left(\sum_{s\in\mathcal{S}}w_{s,c}\,\hat{x}_{s,c}(k)+\Sigma_c\right)^2 \\ &+\lambda_c\left(\sum_{s\in\mathcal{S}}w_{s,c}\,\hat{x}_{s,c}(k)+\Sigma_c\right).
\end{aligned}    
\end{equation}
}
%
%
\change{\begin{proposition}
Problem $\mc{P}_2$ is equivalent to $\mc{P}_1$.
\end{proposition}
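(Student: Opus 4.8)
The plan is to establish the equivalence in two stages that mirror the two roles of the augmented Lagrangian construction: first, that the slack-variable equality constraint~\eqref{eq:c_convex_relax} faithfully encodes the original one-satellite-per-cell membership constraint~\eqref{eq:opt_problem_c3}; and second, that absorbing this equality into the objective as the penalty $f_p$ of~\eqref{eq:penalty} preserves the optimal solution, so that maximizing the augmented objective~\eqref{eq:problem_cvx} at the optimal multipliers recovers the optimum of $\mc{P}_1$.

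For the first stage, I would exploit the convex approximation $w_{s,c}\,\hat{x}_{s,c}(k)\approx\alpha_{s,c}(k)\in\{0,1\}$ from~\eqref{eq:convex_approx}. Under this identification, $\sum_{s}w_{s,c}\,\hat{x}_{s,c}(k)$ counts the number of satellites serving cell $c$, and the original requirement that this count lie in $\{0,1\}$ is reproduced exactly by~\eqref{eq:c_convex_relax}: choosing $\Sigma_c=0$ forces the sum to $0$ (cell unserved), while $\Sigma_c=-1$ forces it to $1$ (cell served by a single satellite). The remaining task here is to argue that relaxing $\Sigma_c$ from $\{-1,0\}$ to the box $-1\le\Sigma_c\le0$ in $\mc{P}_2$ does not enlarge the effective feasible set: since $\sum_s w_{s,c}\,\hat{x}_{s,c}(k)$ is (approximately) integer-valued, the equality $\sum_s w_{s,c}\,\hat{x}_{s,c}(k)+\Sigma_c=0$ pins $\Sigma_c$ to a vertex of the box, so the continuous and discrete versions coincide at every feasible point.

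For the second stage, I would first observe that, once the auxiliary variables $w_{s,c}$ are held fixed within an iteration, both $\hat{R}_{s,c}(k)$ and the residual $\sum_s w_{s,c}\,\hat{x}_{s,c}(k)+\Sigma_c$ are affine in the optimization variables $(\hat{\mathbf{X}}(k),\{\Sigma_c\})$; hence the objective is a concave $\log$ of an affine map minus the convex quadratic-plus-affine term $f_p$, so $\mc{P}_2$ is a convex program and $f_p$ is precisely the augmented Lagrangian term for the equality~\eqref{eq:c_convex_relax}. I would then invoke the exactness of the method of multipliers for convex problems: because all constraints of $\mc{P}_2$ are affine, the linear constraint qualification is automatic and strong duality holds, guaranteeing the existence of optimal multipliers $\{\lambda_c\}$ such that the maximizer of~\eqref{eq:problem_cvx} over the primal variables satisfies the equality constraint with vanishing penalty and attains the same objective value as the explicitly-constrained problem. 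This certifies that the maximizers of $\mc{P}_2$ and of the constrained (relaxed) problem coincide.

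The main obstacle is the exactness claim in the second stage: one must show that the implicit (penalized) and explicit (constrained) formulations share the same global optimum, rather than being related by a duality gap that closes only as $p_c\to\infty$. This reduces to verifying the saddle-point property of the augmented Lagrangian at the optimal dual variables, which follows from convexity together with the constraint qualification, and to checking that the quadratic term does not shift the optimizer once $\{\lambda_c\}$ are optimal. A secondary technical point, inherited from the earlier steps, is that the identification $\alpha_{s,c}(k)=w_{s,c}\,\hat{x}_{s,c}(k)$ is tight only for $\hat{x}_{s,c}(k)\gg\tau$ (cf.~\eqref{eq:ineq_alpha}); the equivalence is therefore understood in the limit of small $\tau$, or equivalently under the assumption that each $\hat{x}_{s,c}(k)$ is either $0$ or bounded away from $0$, which I would state explicitly as the regime in which the proposition holds.
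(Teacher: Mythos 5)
Your proof is correct in substance but reaches the equivalence by a genuinely different route than the paper. The paper's proof is a pure asymptotic penalty argument: citing standard conditions for penalty methods, it observes that, \emph{provided} the approximation $w_{s,c}\,\hat{x}_{s,c}(k)=\alpha_{s,c}(k)$ in~\eqref{eq:convex_approx} is tight, the term $-f_p$ equals $0$ whenever~\eqref{eq:c_convex_relax} holds and tends to $-\infty$ as $p_c\to\infty$ whenever it is violated, so in the limit the implicit constraint is enforced without distorting the objective. You instead invoke the finite-penalty exactness of the augmented Lagrangian: for fixed $w_{s,c}$ the surrogate is a concave program with affine constraints, so the linear constraint qualification gives strong duality and attainment of optimal multipliers, and at $\lambda_c=\lambda_c^*$ the maximizers of~\eqref{eq:problem_cvx} coincide with those of the explicitly constrained problem for any $p_c>0$. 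Your version is strictly stronger --- it explains why the method of multipliers in Algorithm~\ref{alg:1} can terminate at finite $p_c$ rather than only in the limit $p_c\to\infty$, which the paper's argument does not justify --- at the cost of needing the duality and attainment hypotheses, and it holds only per SCA iteration (with $w_{s,c}$ frozen), a caveat you correctly flag. Your first-stage discussion of the box relaxation $\Sigma_c\in[-1,0]$ versus $\Sigma_c\in\{-1,0\}$, and your explicit statement of the tightness regime $\hat{x}_{s,c}(k)\gg\tau$ (equivalently, small $\tau$), are handled more carefully than in the paper, which simply postulates tightness as a condition of the equivalence. One shared limitation, which is not a gap relative to the paper: neither proof addresses the integrality relaxation $x_{s,c}(k)\in\{0,\dotsc,N_C\}$ versus $\hat{x}_{s,c}(k)\in[0,N_C]$; the paper defers this to the rounding step and Algorithm~\ref{alg:valid_allocation}, so ``equivalence'' must be read modulo that relaxation in both arguments.
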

\begin{proof}
According to\cite{Boyd04, Griffin10}, problem $\mc{P}_2$ is equivalent to $\mc{P}_1$ if and only if the convex approximation in~\eqref{eq:convex_approx} is tight and the penalty function $f_p$ fulfills the following two conditions: $1$) it enforces feasibility of the problem in the asymptotic case where the penalty parameter $p_c\to \infty$; and $2$) it does not modify the objective function when all the implicit constraints, derived from the original problem $\mc{P}_1$, are fulfilled. In our maximization problem, this is achieved since
 \begin{equation}
     \lim_{p_c\to\infty} -f_p\left(\hat{\mathbf{X}}(k),p_c\right)=\begin{cases}
        0, & \text{if \eqref{eq:c_convex_relax} is fulfilled,}\\
         -\infty, & \text{otherwise}.
     \end{cases}
 \end{equation}
 Notably, if $w_{s,c}\hat{x}_{s,c}(k)=\alpha_{s,c}(k)$ and all constraints in~\eqref{eq:c_convex_relax} are fulfilled, $\sum_{s\in\mathcal{S}}w_{s,c}\,\hat{x}_{s,c}(k)\leq 1$ and the penalty~\eqref{eq:penalty} becomes $0$. Otherwise, if constraint~\eqref{eq:c_convex_relax} is not fulfilled for a cell $c$, $\sum_{s\in\mathcal{S}}w_{s,c}\,\hat{x}_{s,c}(k) >1$ and,  given $p_c\to\infty$, the penalty~\eqref{eq:penalty} becomes $-\infty$, completing the proof. 
\end{proof}
}
\change{We solve Problem~$\mc{P}_2$ using a combination of \emph{\gls{sca}} and the \emph{method of multipliers.} The former relies in iteratively approximating $\alpha_{s,c}(k)$ as defined in~\eqref{eq:convex_approx} by continuously updating $w_{s,c}$. On the other hand, the method of multipliers is an efficient iterative method to update the penalty terms $\{p_c\}$ and the Lagrange multipliers $\{\lambda_c\}$, to enforce the implicit constraint~\eqref{eq:c_convex_relax}. Specifically, these terms are first initialized to a pre-defined value. Then, at each iteration, the convex problem is solved, for example, using interior point methods. Afterwards, the weights are updated as $w_{s,c}=1/\left(\tau+\hat{x}_{s,c}(k)\right)$ and, if constraint~\eqref{eq:c_convex_relax} is violated for cell $c$ by more than a given tolerance $\Theta$, the penalty term $p_c$ and the Lagrange multiplier $\lambda_c$ are increased by a factor $\Delta$. This process is summarized in Algorithm~\ref{alg:1}.}

\change{Note that the values for the initialization of $w_{s,c}$, $p_c$, $\lambda_{c}$, and $\tau$ must be selected empirically. Furthermore, these are typically initialized to a considerably low value, so it is likely that some cells are initially matched to more than one satellite. Nevertheless, after several iterations of Algorithm~\ref{alg:1}, the values of $\hat{x}_{s,c}(k)$ converge to an optimal solution $\hat{\mathbf{X}}(k)^*$ for $\set{P}_2$ and, by increasing $p_c$ and $\lambda_{c}$, the number of cells matched to multiple satellites will be reduced.} However, there are no guarantees that all the cells are matched to up to one satellite after a finite number of iterations of Algorithm~\ref{alg:1}. In such cases, the solution is outside the feasible region of $\set{P}_1$ according to constraint~\eqref{eq:opt_problem_c3}. To avoid these problems, the real-valued optimal allocation $\hat{\mathbf{X}}(k)^*$ can be mapped to discrete values to obtain the final solution $\mathbf{X}(k)^*$ that fulfills the constraints set for $\mathcal{P}_1$ in~\eqref{eq:problem_cvx}. A common solution to map the real-valued allocation obtained from Algorithm~\ref{alg:1} is simply applying the rounding function to $\hat{x}_{s,c}(k)$. Afterwards, the simple procedure shown in Algorithm~\ref{alg:valid_allocation} is used to modify the allocation $\hat{\mathbf{X}}(k)$ so that it finds the closest point to $\hat{\mathbf{X}}(k)$ in the feasible region of the original problem $\mathcal{P}_1$.

\begin{algorithm}[t]
\caption{JMRA for global optimization of $\mc{P}_2$.}\label{alg:1}
\footnotesize
\setlength{\baselineskip}{1.2em}
\change{
\begin{algorithmic}[1]
\REQUIRE $\mathbf{X}(k-1)$, $\tau$, $\Theta$, $\Delta$, and $\rho_{s,c}(k)$ for all $s,c$
\STATE Initialize $\hat{x}_{s,c}(k)\!\leftarrow 0$, $x'_{s,c}\! \leftarrow 0$, $p_{c}\leftarrow 0$, $\lambda_c\leftarrow 0$, and $w_{s,c}$ for all $s,c$ 
\FOR {$n\in\{1,2,\dotsc,n_\text{iter}\}$}
\STATE Find $\hat{\mathbf{X}}(k)^*$ by solving ~\eqref{eq:problem_cvx}
\IF{$\sum_{s\in\mathcal{S}}w_{s,c}\,\hat{x}_{s,c}(k)-1\leq\Theta$ \AND $\left\lvert\hat{x}_{s,c}(k)-x'_{s,c}\right\rvert<\Theta,~ \forall c$ }
\STATE Convergence criteria fulfilled
\STATE \textbf{Break}
\ELSE
\STATE Update $\displaystyle w_{s,c}\leftarrow 1/\left(\tau+\hat{x}_{s,c}(k)\right)$ and $x'_{s,c}\leftarrow \hat{x}_{s,c}(k)$ for all $s,c$
\FOR {$c\in\mc{C}$} 
\IF{$\sum_{s\in\mathcal{S}}w_{s,c}\,\hat{x}_{s,c}(k)-1>\Theta$}
\STATE $p_c\leftarrow \Delta\cdot p_c$
\ENDIF
\STATE $\lambda_c\leftarrow \max\left\{0,\lambda_c+p_c\left(\sum_{s\in\mathcal{S}}w_{s,c}\,\hat{x}_{s,c}(k)-1\right)\right\}$
\ENDFOR
\ENDIF
\ENDFOR
\STATE $\mathbf{X}(k)^*\leftarrow \round\left(\hat{\mathbf{X}}(k)^*\right)$ 
\STATE Run Algorithm~\ref{alg:valid_allocation} on $\mathbf{X}(k)^*$
\RETURN $\mathbf{X}(k)^*$
\end{algorithmic}}
\end{algorithm}

\begin{algorithm}[t]
\footnotesize

\caption{Resource allocation adjustment.}
\label{alg:valid_allocation}
\setlength{\baselineskip}{1.2em}
\begin{algorithmic}[1]

\REQUIRE $\mathbf{X}(k)$, $\hat{\mathbf{X}}(k)$, and $\alpha_{s,c}(k)$ for all $s,c$
\FOR {$c\in\mathcal{C}$}
\IF {$\sum_{s\in\mathcal{S}(k)}\alpha_{s,c}(k)>1$}
\STATE $s^*\leftarrow \argmax_s R_{s,c}(k)$
\STATE $x_{s,c}(k)\leftarrow 0$ for all $s\neq s^*$
\ENDIF
\ENDFOR
\FOR {$s\in\mathcal{S}(k)$}
\WHILE{$\sum_{c\in\mathcal{C}}x_{s,c}(k)>N_TN_B$}
\STATE $c'\leftarrow \argmax_c x_{s,c}(k)-\hat{x}_{s,c}(k)$
\STATE $x_{s,c'}(k)\leftarrow x_{s,c'}(k)-1$
\ENDWHILE
\ENDFOR
\RETURN $\mathbf{X}(k)$
\end{algorithmic}
\end{algorithm}

\section{Performance analysis methodology} 
\label{sec:performance}
Our analyses focus on characterizing the performance of the proposed \gls{jmra} framework with realistic models for \gls{mle} \gls{snr} estimation, handover interruption time $T_\text{HO}$, and processing time required to solve the optimization problem. \change{In this section, we describe the benchmarks and the metrics used for the performance analysis performed in Sec.~\ref{sec:results}.}

The performance of our \gls{jmra} framework is compared to several benchmarks, divided into the following two categories: 

 \noindent\emph{1) Optimization framework:} The performance of the proposed \gls{jmra} framework described in Algorithms~\ref{alg:1} and~\ref{alg:valid_allocation} is compared to the performance of
 \begin{itemize}
    \item \emph{\Gls{jmra} without handover penalty (\gls{jmra} w/o HOP):}  The handover interruption time $T_\text{HO}$ is set to $0$ when running Algorithm~\ref{alg:1} and, therefore, it is neglected during optimization but included during evaluation.
    \item \emph{\Gls{dmrab} framework:} This follows the typical approach of solving the matching and resource allocation problems separately~\cite{Tang21, Gao21, Deyi2021}.  Therefore, the framework begins by finding a satellite-to-cell/user matching. Afterwards, resource allocation is performed at each satellite individually, with the pre-assigned cells. For this benchmark, we consider a satellite-to-cell one-to-many matching that solves the following optimization problem
    \begin{equation}
    \begin{aligned}
        \mathcal{P}_3: \max_{\{\alpha_{s,c}(k)\}}& \sum_{s\in\mathcal{S}(k)}\sum_{c\in\mathcal{C}_s(k)}\rho_{s,c}(k)\,\alpha_{s,c}(k), \\ 
        \text{subject to } & \eqref{eq:opt_problem_c3}.
    \end{aligned}        
    \end{equation}
     Afterwards, an optimal resource allocation is performed at each satellite to solve the problem
     \begin{IEEEeqnarray}{CCll} \label{eq:opt_problem:bench:ra}
        \mathcal{P}_4: \max_{\mathbf{x}_s(k)}&~&\IEEEeqnarraymulticol{2}{l}{\sum_{c\in\mathcal{C}_s(k)}M_c\log\Big(\!1+R_{s,c}(k)
        \!\Big)},\IEEEyesnumber*\IEEEeqnarraynumspace\label{eq:disj_opt_problem_objective}\\
        \text{subject to}&&
        0\leq x_{s,c}(k)\leq N_C, ~&\forall c\in\mathcal{E}_s(k)\IEEEyessubnumber*\\[0.5em]
        &&\displaystyle \sum_{c \in \mathcal{C}}x_{s,c}(k)\leq N_C N_B,\IEEEyessubnumber*   
    \end{IEEEeqnarray} 
    The \gls{dmrab} framework is described in Algorithm~\ref{alg:dist}.  
\end{itemize}

\begin{algorithm}[tb]
\caption{\Gls{dmrab}.}\label{alg:dist}
\footnotesize
\setlength{\baselineskip}{1.2em}
\begin{algorithmic}[1]
\REQUIRE $\rho_{s,c}(k)$ for all $s,c$
\STATE Initialize $\hat{x}_{s,c}(k)\leftarrow 0$ and $\alpha_{s,c}(k)\leftarrow 0$ for all $s,c$ 
\STATE Initialize $\set{E}_s(k)=\emptyset$ for all $s$
\FOR {$c\in\set{C}$}
\STATE $s^*\leftarrow \argmax_s \rho_{s,c}(k)$
\STATE $\alpha_{s^*,c} (k)\leftarrow 1$
\STATE $\set{E}_s(k)\leftarrow \set{E}_s(k)\cup s^*$ 
\ENDFOR
\FOR {$s\in\set{S}(k)$}
\STATE Find $\hat{\mathbf{x}}_{s,c}(k)^*$ for $\set{E}_s(k)$ by solving~\eqref{eq:disj_opt_problem_objective}
\STATE $\mathbf{x}_k^*\leftarrow \round\left(\hat{\mathbf{x}}_k^*\right)$ 
\WHILE{$\sum_{c\in\mathcal{C}}x_{s,c}(k)>N_C N_B$}
\STATE $c'\leftarrow \argmax_c x_{s,c}(k)-\hat{x}_{s,c}(k)$
\STATE $x_{s,c'}(k)\leftarrow x_{s,c'}(k)-1$
\ENDWHILE
\ENDFOR
\RETURN $\mathbf{X}(k)^*$
\end{algorithmic}
\end{algorithm}

\noindent\emph{2) Sensing framework:} The performance of the \gls{jmra} and \gls{dmrab} frameworks with the realistic sensing framework described in Sec.~\ref{sec:snr} is compared with
\begin{itemize}
    \item \emph{Perfect \gls{csi}:} Idealized case where the satellites obtain a perfect \gls{csi}, including the \gls{snr} $\gamma_{s,c}(k)$ and rain attenuation $A_{s,c}(k)$ from an external source (i.e., without sensing). This corresponds to an upper bound in performance that serves as a benchmark for the \gls{isac} framework.
    \item \emph{No sensing:} Na\"ive case where no sensing is performed and the \gls{snr} for all satellite-cell pairs $(s,c)$ is calculated from~\eqref{eq:snr} assuming  $A_{s,c}(k)=1$.
\end{itemize}





\subsection{\Glspl{kpi}}
As \glspl{kpi} for communication, we consider the average throughput per user and the overall fairness of the \gls{ntn}.
The former is given as $\overline{R} = \frac{1}{K}\sum_{k=0}^K\sum_{c\in\mc{C}}\sum_{s\in\mc{S}(k)} R_{s,c}(k)$
with $R_{s,c}(k)$ defined in~\eqref{eq:rate}, and $K$ the number of system frames under test.
For the latter, we use Jain's fairness index for individual frames, defined as
\begin{equation}
\mathcal{J}_k = \frac{\Big(\sum_{c\in\mathcal{C}}M_c\sum_{s\in\set{S}(k)}R_{s,c}(k)\Big)^2}{\Big(\sum_{c\in\mathcal{C}}M_c\Big)\sum_{c\in\mathcal{C}}M_c\Big(\sum_{s\in\set{S}(k)}R_{s,c}(k)\Big)^2}.
\end{equation}

As \gls{kpi} for sensing, we consider the \gls{nmse} for the \gls{snr} $\hat{\gamma}_{s,c}(k)$ and rain attenuation $\hathat{A}_{s,c}(k)$ estimators. The \gls{nmse} for $\hat{\gamma}_{s,c}(k)$ is
\begin{equation}
    \text{NMSE}_\gamma = \frac{\sum_{k=1}^K\sum_{c\in\set{C}}\sum_{s\in\set{S}(k)}\left(\gamma_{s,c}(k) - \hat{\gamma}_{s,c}(k) \right)^2}{\sum_{k=1}^K\sum_{c\in\set{C}}\sum_{s\in\set{S}(k)}\gamma_{s,c}(k)^2},
\end{equation}
and an analogous formulation is used for $\hathat{A}_{s,c}(k)$.

\subsection{Complexity analysis}
\textbf{Algorithm~\ref{alg:1}}: The complexity of solving $\set{P}_2$ once, for fixed values of $w_{s,c}$ (line 3) using an interior point method is $\set{O}\left(S(k)^3C^3\right)$. Since this process is repeated $n_\text{iter}$ times (lines 2 to 5), the solution of $\set{P}_2$ has a complexity $\set{O}\left(S(k)^3C^3n_\text{iter}\right)$.

\textbf{Algorithm~\ref{alg:dist}}: The complexity of solving the disjoint optimization problem is determined by the complexity of using an interior point method to solve $\set{P}_4$ at each satellite. In the worst case, one satellite will cover $C_s$ cells and, hence, solving the \gls{ra} for one satellite has a complexity $\set{O}\left(C_s^3\right)$. Since this must be preformed for each satellite, the complexity is $\set{O}\left(S(k)C_s^3\right)$.

\section{Results}
\label{sec:results}
We consider a rectangular area covering central Europe (the same shown in Fig.~\ref{fig:maps}), between latitudes $40^\circ$ and $55^\circ$ North  and longitudes $5^\circ$ and $30^\circ$ East. The cells are evenly spaced, each covering $0^\circ 15^\prime$ in both latitude and longitude \change{and the population of each cell $M^\text{max}_c$ is obtained from~\cite{CIESIN}}. Therefore, there are a total of $6161$ cells in the area, out of which $766$ have zero users because they are over the sea or entirely unpopulated areas. \change{Table~\ref{tab:sim_params} presents the simulation parameters used for performance evaluation. The parameters for the  satellite constellation correspond to those of the Starlink orbital shells at $550$ and $570$ km of altitude. The duration of the \gls{ofdma} frames and most of the communication parameters correspond to those in 3GPP 5G \gls{ntn} technical reports~\cite{3GPPTR38.821, 3GPPTR38.811}. The rain parameters were adapted to the discrete-time model presented in Section~\ref{sec:rain} from~\cite{Cow95}. Finally, to capture the dynamics of the environment, the length of the system frames is set to $T_F=\{10,30\}$\,s, which is shorter than or equal to typical values considered in the \gls{leo} satellite literature~\cite{Zhao24} and more than $10\times$ shorter than the period between required handovers in quasi-Earth fixed cell scenarios~\cite{Wigard23}.}

The results presented in this section were obtained using a simulator coded in Python to replicate the orbital dynamics of the satellites, the geographical distribution of the population, and the dynamics of the rain model~\cite{Cow95}. Each simulation comprises at least $100$ frames and the optimization problems were solved using the CVXPY package~\cite{diamond2016cvxpy} using MOSEK ApS as solver. 

\begin{table}[t]
\renewcommand{\arraystretch}{1.1}
\centering
\caption{Simulation parameters}
\begin{tabular}{@{}llcc@{}}
    \toprule \textbf{Parameter} & \textbf{Symbol} & \multicolumn{2}{c}{\textbf{Value}} \\\midrule 
   \textbf{Satellite constellation} && S-band & Ka-band\\\midrule
    Center frequency [GHz] & $f_i$ & $2.185$ & $19.95$\\
    Bandwidth [MHz]  & $B_i$ & $30$ & $500$\\
     Satellite antenna gain [dBi] & $G_{s,c}$ & $24$ & $30.5$\\
    Total number of satellites & $S_i$ & $720$ & $1584$\\
     Number of orbital planes & $P_i$  & $36$& $72$\\
     Altitude of deployment [km] & $h_i$  & $570$& $550$\\
     Inclination [deg] & $\delta$  & $70$& $53$\\
     Transmission power [W] &$P_\text{tx}$ & $75$ & $75$\\ 
     Pointing loss [dB]  & $\ell_\text{dB}$ & $0.3$ & $0.3$\\
     Minimum elevation angle [deg] & $\eta$ & $25$ & $25$\\ 
     \change{Number of beams per satellite} & \change{$N_B$} & \change{$19$} & \change{$19$}\\
    \midrule
    \textbf{Ground segment}\\
    \midrule
    Number of cells in the area & $C$ & \multicolumn{2}{c}{$6161$}\\
    User/anchor node antenna gain [dBi] & $G_{c,s}$& \multicolumn{2}{c}{$0$}\\
    Noise spectral density [dBm/Hz] & $N_{0,\text{dB}}$ & \multicolumn{2}{c}{$-176.31$}\\
    \change{Population per cell $c$} & \change{$M^\text{max}_c$} & \multicolumn{2}{c}{\change{From~\cite{CIESIN}}}\\
    Ratio of active users & $\mu_c$ & \multicolumn{2}{c}{$0.001$}\\
\midrule 
\textbf{Rain parameters}\\\midrule
     Rain cell intensity (PPP) [rain cells/km$^2$] & $\lambda_\text{rain}$ &\multicolumn{2}{c}{$8.4\times10^{-4}$}\\
     Rain height [km] & $h_r$ & \multicolumn{2}{c}{$6$}\\
     Mean rain intensity & $\overline{\varrho}$ & \multicolumn{2}{c}{$8.77$}\\
     Mean rain cell radius [km] & $d_\text{rain}$ & \multicolumn{2}{c}{$22.6$}\\
     Mean duration of a rain episode [h] & $\varepsilon$ & \multicolumn{2}{c}{$1.886$}\\
     Mean period between rain episodes [h] & $\beta$ & \multicolumn{2}{c}{$5.376$}\\
     Probability that a rain cell is active& $\pi_\text{on}$ & \multicolumn{2}{c}{$0.26$}\\
     \midrule
\textbf{Frame structure}\\
\midrule
    Frame duration [s] & $T_F$ &\multicolumn{2}{c}{$\{10,30\}$}\\
     \gls{ofdma} frame duration [ms] & $T$ & \multicolumn{2}{c}{$10$}\\
     Disconnection time due to handover [ms]& $T_\text{HO}$ & \multicolumn{2}{c}{$\{50, 100\}$}\\
     Pilot length [symbols] & $L_p$ & \multicolumn{2}{c}{$\{2^2,2^4,\dotsc,2^{16}\}$}\\
     \bottomrule
\end{tabular}
\label{tab:sim_params}
\vspace{-10pt}
\end{table}


\change{\paragraph{Convergence, real-time implementation considerations, and parameter tuning}
First, we evaluate the convergence properties of Algorithm~\ref{alg:1} and the impact of the number of iterations on performance and execution time. Recall that, Algorithm~\ref{alg:1} reaches convergence, as indicated in line 4, when the violation of constrain~\eqref{eq:c_convex_relax} and the change in resource allocation between two iterations for all cells $c$ are below a pre-defined tolerance $\Theta$, which was set to $0.01$. To evaluate convergence, Fig.~\ref{fig:convergence_CDF} shows the empirical \gls{cdf} of the number of iterations needed to reach convergence for different values of parameter $\Delta$, which defines the increase in the penalty $p_c$ when constraint~\eqref{eq:c_convex_relax} is violated. As it can be seen, convergence was reached for all the settings of $\Delta$ and a higher value of $\Delta=10$ decreases the number of iterations needed to reach convergence. However, increasing $\Delta$ must be done with caution, as it might result in infeasible solutions and/or to numerical issues with the optimization software. Next, Fig.~\ref{fig:exec_time_double_matchings} compares the execution time and the performance achieved by Algorithm~\ref{alg:1} until convergence with those for a reduced number of maximum iterations $n_\text{iter}\leq 5$. These results were obtained by executing Algorithm~\ref{alg:1} in a MacBook Pro 2021 with an M1 Pro CPU and 16 GB of RAM. As expected, reducing the number of iterations of Algorithm~\ref{alg:1} results in a linear decrease of its execution time but also increases the amount of cells matched to multiple satellites and, thus, violating constraint~\eqref{eq:c_convex_relax}. Moreover, both the execution time and the cells violating the implicit constraint~\eqref{eq:c_convex_relax} are lower for $\Delta=10$ than for $\Delta=5$. These are expected results, since the number of iterations to reach convergence is lower for $\Delta=10$ than for $\Delta=5$, and a higher value of $\Delta$ imposes a higher penalty to cells violating~\eqref{eq:c_convex_relax}. In a real-time implementation, however, since the frame length $T_F$ determines the period between two consecutive executions of Algorithm~\ref{alg:1}, $n_\text{iter}$ must be adequately selected to achieve an execution time that satisfies condition~\eqref{eq:ra-time}, stated in Section~\ref{sec:ra-frame}. Thus, to validate a real-time implementation of Algorithm~\ref{alg:1}, we considered $\Delta=10$ and obtained that the mean execution time of Algorithm~\ref{alg:1} until convergence is
$T_\mathrm{RA} = 18.184\pm 0.319$\,s with $95\%$ confidence. 
Then, by considering $T_F = 30$\,s, and the longest sensing and overhead times $T_S = 600$\,ms and $T_\mathrm{HO} = 100$\,ms of the paper, we calculated the probability of the execution time $T_\mathrm{RA,2}$ exceeding the worst case of $T_F - T_S - T_\mathrm{HO} = 29.3$\,s to be $1.313 \cdot 10^{-6}$. For this, we fitted a Gamma distribution to the execution times, obtaining a goodness of fit with p-value $0.699$, and neglected $T_\mathrm{RA,1}$ and $T_\mathrm{RA,3}$, which are much shorter than $T_\mathrm{RA,2}$. This allowed us to conclude that Algorithm~\ref{alg:1} can achieve convergence in a real-time implementation in a device with similar or greater processing capabilities than those of the one mentioned above for $T_F=30$\,s. Conversely, achieving a real-time implementation for $T_F=10$\,s with our software implementation and processing hardware is only possible for $n_\text{iter}\leq2$. Nevertheless, convergence can be achieved in real-time through hardware and/or software improvements. Therefore, we set $\Delta=10$ and execute Algorithm~\ref{alg:1} until convergence for the rest of the results presented in this section.}

\change{Afterwards, we identify an adequate parameter setting for the initialization of $w_{s,c}$. Our experiments revealed that the specific setting has a minor impact on performance as long as these are initialized to the same value $w_{s,c}\ll 1$ for all $(s,c)$. Thus, to obtain the results presented in this paper, we initialized $w_{s,c}\leftarrow C/(N_C \,N_B\,S(k))$, which is a non-informative prior that assumes that equal resources are given to all the cells. In the best case, the latter initialization leads to a $2.45$\% decrease in the number of handovers compared to initializing $w_{s,c}$ to $0$.}

\begin{figure}[t]
    \centering
\includegraphics{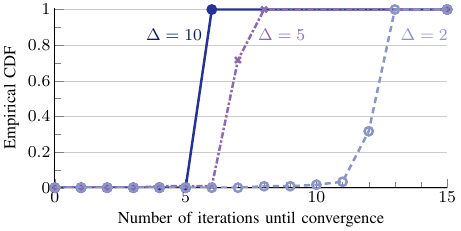}
    \caption{\change{Empirical CDF of the number of iterations needed to achieve convergence for different values of parameter $\Delta=\{2,5,10\}$, which determines the increase of $p_c$ at each iteration.}}
    \label{fig:convergence_CDF}
\end{figure}

\begin{figure}[t]
    \centering
\subfloat[]{\includegraphics{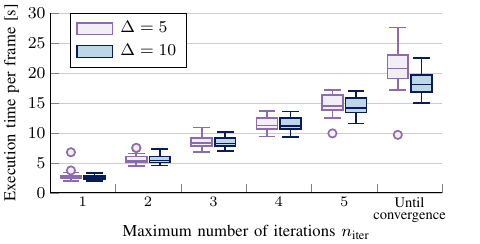}}\\
\subfloat[]{\includegraphics{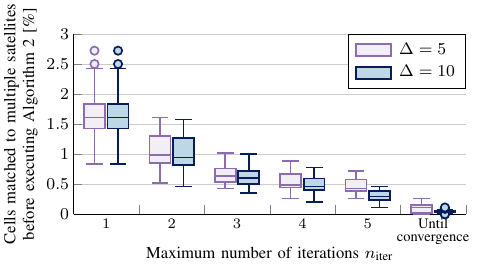}}
 \caption{\change{Box plots of the  (a) execution time of Algorithm~\ref{alg:1} and (b) percentage of cells violating constraint~\eqref{eq:opt_problem_c3} for different values of $n_\text{iter}$ and $\Delta$.}}
  \label{fig:exec_time_double_matchings}
  \vspace{-1em}
\end{figure}

\paragraph{Impact of handovers in short frames}
After finding adequate parameters for Algorithm~\ref{alg:1}, evaluate the impact of the handover disconnection time $T_\text{HO}$ on performance \change{by considering  $T_\text{HO}=\{50,100\}$\,ms, along with a perfect \gls{csi}. Furthermore, we consider two typical frame lengths $T_F=\{10,30\}$\,s, which are sufficiently short to guarantee that the change in path loss, from the beginning to the end of the frame, for any satellite-cell pair is below $0.574$ dB for $T_F=10$~s and $1.711$~dB for $T_F=30$~s.} Our results with the \gls{jmra} framework and two benchmarks are presented in Table~\ref{tab:results_short_frames}. 

Table~\ref{tab:results_short_frames} shows that our proposed \gls{jmra} framework greatly outperforms \gls{dmrab}, increasing the throughput by up to $59$\% and the fairness index by more than $600$\%. Furthermore, \gls{dmrab} cannot find a feasible solution for $\set{P}_4$ with $T_F=10$\,s and $T_\text{HO}=100$\,ms, as the resources in S-band satellites matched to a large number of cells are insufficient to compensate for $T_\text{HO}$. Additionally, \gls{jmra} improves both the throughput and fairness compared to \gls{jmra} w/o HOP, with the highest improvement observed with $T_\text{HO}=100$\,ms and $T_F=10$\,s. 

The superior performance of \gls{jmra} is due to the careful satellite-to-cell matching to account for $T_\text{HO}$, which is reflected in the average number of handovers per second. Namely, \gls{jmra} adapts the number of handovers depending on $T_\text{HO}$, whereas the two benchmarks reach the exact same solution for a given $T_F$ and, consequently, the same average number of handovers, regardless of the value of $T_\text{HO}$. Furthermore, while the increase in performance of \gls{jmra} vs. \gls{jmra} w/o HOP is modest, it comes with no additional overhead and with no added complexity of Algorithm~\ref{alg:1}. Consequently, \gls{jmra} w/o HOP is not considered further.

\begin{table*}[htb]
\centering
\caption{Impact of $T_\text{HO}$ on the performance of the proposed \gls{jmra} framework and two benchmarks for different $T_F$ considering perfect \gls{csi}.}
\begin{tabular}{@{}lcrrrrrrrr@{}}
\toprule
& & \multicolumn{2}{c}{\textbf{Per-user throughput [kbps]}} &~& \multicolumn{2}{c}{\textbf{Fairness index}} &~& \multicolumn{2}{c}{\textbf{Handovers per second}}\\\cmidrule{3-4} \cmidrule{6-7} \cmidrule{9-10}
\textbf{Optimization framework}& $T_\text{HO}$\,[ms]&$T_F=10$\,s &$T_F=30$\,s &&$T_F=10$\,s  &$T_F=30$\,s&&$T_F=10$\,s &$T_F=30$\,s\\\midrule
~\gls{dmrab} & & $101.654$ & $99.486$ & & $0.126$ & $0.117$ & & $101.953$ & $82.946$\\
~\change{\gls{jmra} w/o HOP} & $50$& \change{$164.445$} &\change{$166.594$}&& \change{$0.784$}&\change{$0.790$} && \change{$148.554$}& \change{$118.707$}\\
~\change{\textbf{\gls{jmra}}}&  & \change{$163.996$} & \change{$166.586$} &&\change{$0.790$} & \change{$0.791$} & & \change{$125.738$} & \change{$114.267$}\\\midrule
~\gls{dmrab} & & -- & $97.855$ & &  -- & $0.115$ & & -- & $82.946$\\
\change{~\gls{jmra} w/o HOP} &  $100$& \change{$162.058$}& \change{$164.635$} &&\change{$0.777$} & \change{$0.789$} &&\change{$148.554$}& \change{$118.707$}\\
~\change{\textbf{\gls{jmra}}}&  & \change{$160.953$} & \change{$164.472$} && \change{$0.792$} & \change{$0.794$} & & \change{$111.161$} & \change{$111.055$}\\
\bottomrule
\end{tabular}
\label{tab:results_short_frames}
\vspace{-6pt}
\end{table*}

\paragraph{Impact of pilot length on performance}
Next, we evaluate the impact of the pilot length $L_p$ on the performance of sensing and communication for $T_\text{HO}=50$\,ms. 

To evaluate the sensing performance, Fig.~\ref{fig:nmse} shows the \gls{nmse} for the \gls{snr} estimation $\hat{\gamma}_{s,c}(k)$ and the estimated attenuation due to rain $\hathat{A}_{s,c}(k)$ for each satellite-cell pair for different pilot lengths $L_p$. The results for cells with no rain ($\varrho_c=0$) and with rain ($\varrho_c>0$) are shown with two separate curves. Note that the results for $\hat{A}_{s,c}(k)$ are not included as the bias for this estimator is so large that the \gls{nmse} exceeds $10^6$ for all the considered values of $L_p$.
As it can be seen, the \gls{nmse} of $\hat{\gamma}_{s,c}(k)$ decreases rapidly and follows a similar trend for cells with and without rain. On the other hand, there is a major difference between the \gls{nmse} of $\hathat{A}_{s,c}(k)$ with and without rain. Namely, in the absence of rain, $A_{s,c}(k)=1$ and the \gls{nmse} for $\hathat{A}_{s,c}(k)$ is always below $10^{-1}$. On the other hand, the \gls{nmse} of $\hathat{A}_{s,c}(k)$ for cells with rain is close to one for most values of $L_p$ and only a noticeable decrease is obtained for $L_p\geq 2^{14}$. To understand the reason for this behavior, it is beneficial to observe Fig.~\ref{fig:nmse_snr}, which shows the evolution of the \gls{nmse} for the \gls{mle} $\hat{\gamma}_{s,c}(k)$ at different \gls{snr} levels along with the \gls{crb}. While the \gls{nmse} of the \gls{crb} decreases as the \gls{snr} decreases, the \gls{mle} exhibits the opposite behavior. Thus, the error with the \gls{mle} increases rapidly as the \gls{snr} decreases. Naturally, as cells with rain experience an higher attenuation, the \gls{nmse} of $\hat{\gamma}_{s,c}(k)$ is higher for cells with rain. While this difference is small, it has a major impact on the estimator $\hathat{A}_{s,c}(k)$, as observed in Fig.~\ref{fig:nmse}. 

\begin{figure}[t]
\centering
\includegraphics{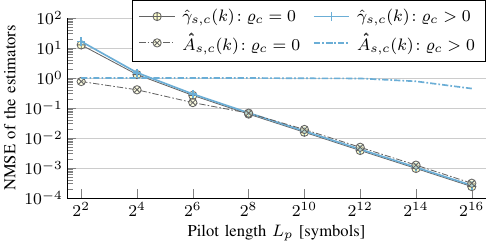}
\caption{\Gls{nmse} for the \gls{mle} of the \gls{snr} $\hat{\gamma}_{s,c}(k)$ and the attenuation due to rain $\hathat{A}_{s,c}(k)$ as a function of the pilot length $L_p$. Results are shown for cells with rain $\varrho_c(k)>0$ and cells with no rain $\varrho_c(k)=0$.}
\label{fig:nmse}
\end{figure}

\begin{figure}[t]
\vspace{-10pt}
\centering
\includegraphics{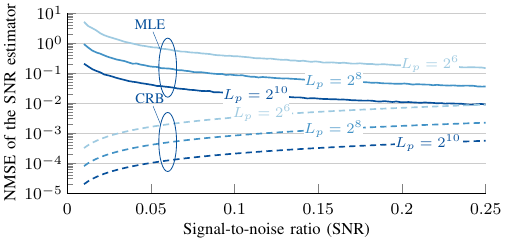}
\caption{\Gls{nmse} for the \gls{snr} estimation with the \gls{mle} $\hat{\gamma}_{s,c}(k)$ and the \gls{crb} for pilot lengths $L_p= \{2^6, 2^8, 2^{10}\}$.}
\label{fig:nmse_snr}
\end{figure}

To evaluate the communication performance, we show the average throughput and fairness index for the \gls{dmrab} and the proposed \gls{jmra} framework for $L_p= [2^2, 2^{16}]$ and the no sensing case in Fig.~\ref{fig:perf_vs_lp}. When sensing is performed, the throughput achieved by both \gls{dmrab} and \gls{jmra}, shown in Fig.~\ref{fig:th_lp} for $T_F=\{10,30\}$\,s, increases with the pilot length until $L_p=2^8$. The reason for this behavior is that the sensing frame is $T_S = 20$\,ms for all $L_p\leq 2^8$. Then, $T_S$ increases for all $L_p>2^8$ until reaching $T_S = 600$\,ms for $L_p=2^{16}$, which greatly decreases the throughput for \gls{dmrab}. 
Note that the performance of \gls{dmrab} exhibits major variations both in terms of the average throughput per user shown in Fig.~\ref{fig:th_lp} and in fairness shown by the box plot in Fig.~\ref{fig:fair_lp}. Namely, its fairness is greatly variable both across different frames with the same pilot length $L_p$ and across different values of the pilot length. This behavior arises from its inability to consider the impact of $T_\text{HO}$ and to the large amount of cells with rain that prefer to connect to S-band satellites.

On the other hand, it is clear that the performance of \gls{jmra} increases with the pilot length $L_p$ both in terms of throughput and fairness index. Specifically, the \gls{jmra} with $L_p=2^{16}$ and $T_F=30$\,s achieves a \change{$17.8$\%} increase in throughput when compared to the case without sensing, which accurately captures the \gls{snr} for cells without rain but not of those with rain. The only case where an increase of $L_p$ results in a decrease in performance is for $T_F=10$\,s and $L_p=2^{16}$. Even though is not included in Fig.~\ref{fig:perf_vs_lp}, a similar decrease occurs for $T_F=30$\,s and $L_p > 2^{16}$. The reason for this behavior is the increased length of the sensing phase, which would exceed $T_S=600$\,ms, accounting for more than $6$\% of the total frame duration. Fig.~\ref{fig:fair_lp} also shows that the fairness index for \gls{jmra} is greatly consistent, outperforming the case without sensing for $L_p\geq 2^4$ and presenting only slight variations both across different frames with the same pilot length $L_p$ and across different values of the pilot length, showcasing the vast superiority of \gls{jmra}, \change{which achieves up to a $700$\% increase in fairness over \gls{dmrab}.} 

\begin{figure}[t]
    \centering
    \subfloat[]{\includegraphics[]{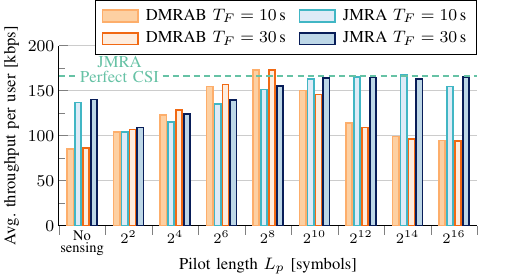}\vspace{-0.3em}\label{fig:th_lp}}\\[0.5em]
    \subfloat[]{\includegraphics[]{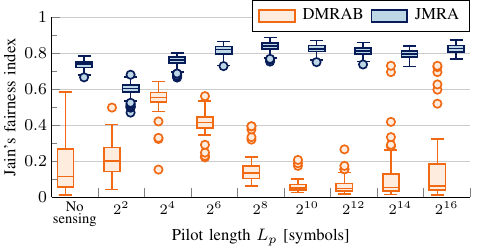}\vspace{-0.3em}\label{fig:fair_lp}}
    \caption{(a) Average per-user throughput for $T_F=\{10, 30\}$\,s and (b) Jain's fairness index per frame with $T_F=30$\,s as a function of the pilot length $L_p$.}
    \label{fig:perf_vs_lp}
\end{figure}

\paragraph{Impact of the frame length}
As discussed in Sec.~\ref{sec:protocol}, the duration of a system frame is constrained by the variation of the \gls{gsl}. In Fig.~\ref{fig:long_frames}, we illustrate this aspect by showing the impact of the frame length on the communication performance. It can be seen that the performance of both \gls{jmra} and \gls{dmrab} remains stable for $T_F \le 110$\,s, but then the throughput (Fig.~\ref{fig:long_frames_th}) and fairness (Fig.~\ref{fig:long_frames_fair}) plummet. This is because the movement of the satellites generates a significant change in the \gls{gsl} path within the frame for $T_F > 110$\,s, which greatly restricts the space of feasible solutions in the optimization. 

\begin{figure}[bth]
    \centering
    \subfloat[]{\includegraphics{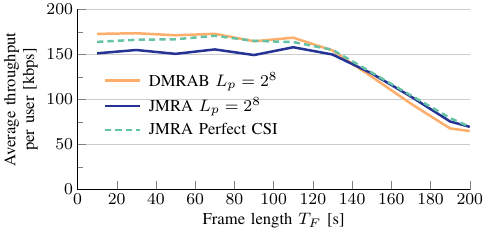}\vspace{-6pt}\label{fig:long_frames_th}}\\
    \subfloat[]{\includegraphics{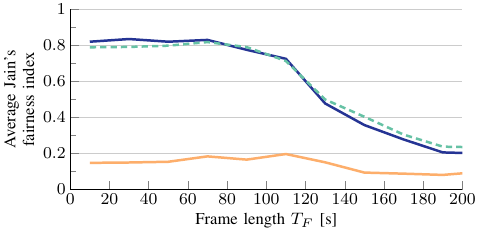}\vspace{-6pt}\label{fig:long_frames_fair}}
    \caption{Average per-user throughput for the \gls{dmrab} with $L_p=2^8$ and with our proposed \gls{jmra} with $L_p=2^8$ and with perfect CSI for $T_F\in\left[10,200\right]$.}
    \label{fig:long_frames}
\end{figure}

\section{Conclusions}
\label{sec:conclusions}
In this paper, we proposed an \gls{isac} framework and a candidate frame structure to achieve an efficient and fair resource allocation in \gls{leo} satellite constellations across wide areas. By jointly solving the satellite-to-cell matching and resource allocation problems, the proposed framework achieves \change{up to a $700$\% increase in fairness} index w.r.t. the selected benchmark, where these problems are solved independently. Furthermore, by carefully selecting the length of the pilot signal for sensing, the proposed \gls{isac} framework can achieve a per-user throughput that is less than $1$\% lower than the one achieved by the upper bound, which assumes perfect \gls{csi} with no signaling overhead. Moreover, we observed that setting a frame length of $10$ to $30$ seconds is adequate to adapt to the dynamic network topology and to the changes in the rain patterns without incurring in excessive signaling overhead due to handovers and \gls{csi} acquisition. Finally, we confirmed that our solution can be implemented in real time at a central server, even with the limited processing power of a typical PC. Future work includes refinements to distributed resource allocation algorithms to achieve a higher fairness, as well as mechanisms for satellite-based digital twinning of the atmospheric conditions. 

\appendix
\label{sec:appendix}
\change{As it can be seen} in~\eqref{eq:att_est}, the estimator $\hat{A}_{s,c}(k)$ is a random variable whose distribution is the inverse distribution of random variable $\hat{\gamma}_{s,c}(k)$ multiplied by the \gls{snr} with no rain attenuation, which is a known constant. Since the distribution of $\hat{\gamma}_{s,c}(k)$ is not known, it is not trivial to characterize  $\mathbb{E}\big[\hat{A}_{s,c}(k)\big]$. Nevertheless, from the second-order Taylor approximation~\eqref{eq:2nd-order}, we know that the estimator $\hat{A}_{s,c}(k)$ is biased.
On the other hand, the estimator
\begin{equation}
    \hat{A}^*_{s,c}(k)\triangleq \hat{A}_{s,c}(k) / \left(\mathbb{E}\big[\hat{A}_{s,c}(k)\big]/A_{s,c}(k) \right)
    \label{eq:opt_est}
\end{equation}
would clearly be unbiased, but it is infeasible, since the unknown parameter that must be estimated $A_{s,c}(k)$ is present in the denominator. 
Nevertheless, it is feasible to use the approximation of $\mathbb{E}\big[\hat{A}_{s,c}(k)\big]$ from~\eqref{eq:2nd-order}
to we define the refined estimator $\hathat{A}_{s,c}(k)$. This is done by by 1) substituting $\text{var}(\gamma_{s,c}(k))$ in~\eqref{eq:2nd-order} with the \gls{crb} defined in~\eqref{eq:crb}; 2) substituting $\gamma_{s,c}(k)$ with $\hat{\gamma}_{s,c}(k)$ as an approximation; and 3) substituting $A_{s,c}(k)$ with $\hat{A}_{s,c}(k)$ in~\eqref{eq:2nd-order} and~\eqref{eq:opt_est}, obtaining
\begin{IEEEeqnarray}{rCl}
    \hathat{A}_{s,c}(k)&\triangleq& \hat{A}_{s,c}(k) / \left(1+3\middle/L_p\hat{\gamma}_{s,c}(k)\right).
\end{IEEEeqnarray}
By substituting $\hat{A}_{s,c}(k)$ with its definition given in~\eqref{eq:att_est}, we obtain the estimator given in~\eqref{eq:att-hathat}. \hfill\qedsymbol

\bibliographystyle{IEEEtran}
\bibliography{bib.bib}

\end{document}